\documentclass[preprint,11pt,authoryear]{elsarticle}

\usepackage{lipsum}

\usepackage{amssymb}
\usepackage{amsmath}

\newtheorem{theorem}{Theorem}
\newtheorem{example}{Example}%
\newtheorem{remark}{Remark}%
\newtheorem{corollary}{Corollary}
\newtheorem{algorithm}{Algorithm}
\newenvironment{proof}{\paragraph{Proof:}}{\hfill$\square$}
\usepackage{siunitx}
\usepackage{algorithmic}

\begin{document}

\begin{frontmatter}



\title{Some general results on risk budgeting portfolios} 


 \author[label1]{Claudia Fassino} \author[label2]{Pierpaolo Uberti}
 \affiliation[label1]{organization={University of Genoa},
             addressline={Via Dodecaneso 35},
             city={Genoa},
             postcode={16146},
             country={Italy}}

 \affiliation[label2]{organization={University of Milano-Bicocca},
             addressline={Via Bicocca degli Arcimboldi 8},
             city={Milan},
             postcode={20126},
             country={Italy}}



\begin{abstract}
Given a reference risk measure, the risk budgeting is the portfolio where each asset contributes a predetermined amount to the total risk. We propose a novel approach—alternative to the ones proposed in the literature—for the calculation of the risk budgeting portfolio. This different perspective on the problem has several interesting consequences. For the calculation of the portfolio, we define a Cauchy sequence within the simplex of $\mathbb R^n$, whose limit corresponds to the risk budgeting portfolio. This construction allows for the straightforward implementation of an efficient algorithm, avoiding the need to solve auxiliary, equivalent optimization problems, which may be computationally challenging and hard to interpret in the decision theory context. We compare our algorithm with the standard optimization-based methods proposed in the literature. From a theoretical point of view, starting from the Cauchy sequence, we define a function for which the risk budgeting portfolio is a fixed point. Therefore, sufficient conditions for the existence and uniqueness of the fixed point can be used. The methodology is developed for general risk measures and implemented in detail in the case of standard deviation.
\end{abstract}



\begin{keyword}
Risk Budgeting \sep Risk Parity \sep Fixed Point



\end{keyword}

\end{frontmatter}

\section{Introduction}

Risk parity refers to an investment strategy in which each asset contributes equally to the total portfolio risk. In its original formulation, the risk parity portfolio was introduced using the standard deviation of returns as the reference risk measure. The idea of allocating the wealth by equally distributing risk exposures across assets is simple and intuitive. This strategy has attracted considerable attention from both academic researchers and practitioners, as it offers a useful alternative to classical allocation models -primarily inspired by the Markowitz framework (see \cite{M})- which struggled to address the challenges revealed by the 2008 global financial crisis.

Standard optimization-based allocation techniques suffer from several well-known shortcomings. These approaches are subject to model instability, as pointed out in \cite{kan2007optimal}, and to numerical instability: even infinitesimal variations in the model’s inputs can lead to a large changes in the optimal allocation, see \cite{best1991sensitivity}, causing the in-sample frontier to be a biased estimator of the real frontier, see \cite{kan2008distribution}. Furthermore, optimization models often tend to concentrate the allocation in a small number of assets, see \cite{yanushevsky2015approach}. These drawbacks are among the principal reasons why the equally weighted portfolio and other heuristic allocation strategies frequently outperforms optimization-based approaches in out-of-sample experiments (see, among others, \cite{demiguel2009optimal}, \cite{yuan2024naive} and \cite{gelmini2024equally}). Risk parity portfolios are attractive because they do not exhibit these limitations. They are only slightly sensitive to parameters uncertainty and yield numerical stable allocations that are not concentrated in a few assets. Thus, the practical implementation is straightforward and transaction costs have negligible impact on performance. Over the past 15 years, numerous authors have studied the risk parity portfolio using different risk measures. Notable examples include \cite{hallerbach1999decomposing} for Value at Risk, \cite{mausser2018long} for Expected Shortfall, \cite{ararat2024mad} for Mean Absolute Deviation, and \cite{bellini2021risk} for expectiles.

Risk budgeting is a natural generalization of risk parity: for a given risk measure, the investor decides the risk exposure contribution of each asset and, subsequently, computes the allocation satisfying the predetermined risk exposures. Despite its simple definition, no closed-form solution exists for the risk budgeting portfolio, making numerical methods necessary for its calculation. One central issue concerning risk budgeting portfolios is the choice of the reference risk measure. Ideally, an investor can select any risk measure and subsequently compute the corresponding risk budgeting portfolio. To this end, it is necessary to calculate the marginal risk contribution of each asset to the total risk, given the chosen risk measure. A very recent paper provides general results on this topic. In \cite{cetingoz2024risk}, the existence and uniqueness of the risk budgeting portfolio for general risk measures are investigated. Further research, limited to the class of so-called coherent risk measures (see \cite{artzner1999coherent}), focuses on approximating the risk budgeting portfolio using simulated data \cite{da2023risk}. Both approaches calculate the risk budgeting portfolio by solving an auxiliary optimization problem, which generalizes the optimization framework originally proposed in \cite{maillard2010properties} for the risk parity portfolio when standard deviation serves as the reference risk measure. While \cite{maillard2010properties} is a milestone in this context, it lacks a formal mathematical proof of the existence and uniqueness of the risk parity portfolio. The first such proof for the risk parity portfolio under the standard deviation measure can be found in \cite{spinu}. Let us underline that both the results on the existence and uniqueness of the risk budgeting portfolio and its calculation strongly rely on the definition and solution of an auxiliary optimization problem. This approach, although standard, has some known issues. The optimization problem does not have an immediate economic interpretation, so it is difficult to explain why an investor should allocate her/his wealth by solving that problem. From an empirical point of view, the computation of the risk budget allocation suffers from issues related to numerical optimization. From a theoretical point of view, the results on the uniqueness and existence of the risk budgeting portfolio entirely rely on the convexity of the optimization problem.

We study the risk budgeting portfolio from a novel and alternative perspective. As often happens in mathematics, a different perspective on a given problem can unveil interesting new aspects. We start by defining a Cauchy sequence within the simplex of $\mathbb{R}^N$. Since the simplex with the Euclidean norm forms a complete metric space, every Cauchy sequence converges to a point within this space. By construction, the limit of the sequence corresponds to the risk budgeting portfolio. This approach immediately leads to an algorithm for computing the risk budgeting portfolio. Through numerical simulations, we show that the algorithm is both fast and efficient compared to standard optimization-based approaches proposed in the literature. From a theoretical standpoint, we show that the Cauchy sequence enables the definition of a function on the simplex for which the risk budgeting portfolio is a fixed point by construction. Therefore, conditions for the existence and uniqueness of the fixed point can be used to investigate the existence and uniqueness of the risk budgeting portfolio. Our approach is introduced in a general framework, independently of the choice of a reference risk measure. Subsequently, we exploit the method to the case of the standard deviation and detail its implementation. Our contribution to the literature is multifaceted. Theoretically, we provide sufficient conditions for the existence and uniqueness of the risk budgeting portfolio within the simplex. These conditions depend on the functional form of the marginal risk contribution of each asset to the total risk, relative to the chosen risk measure. Empirically, we derive an efficient algorithm for calculating the risk budgeting portfolio, avoiding the need to solve equivalent auxiliary optimization problems.

The paper is organized as follows: Section \ref{ass} presents the main idea, the results for a general risk measure, and the structure of the algorithm. Section \ref{stdev} applies the methodology to the case where the reference risk measure is the standard deviation. Section \ref{num} is devoted to the numerical results, while the conclusions are presented in Section \ref{sec:conclusions}.

\section{The general idea for an arbitrary risk measure}
\label{ass}

In this section, we illustrate the general idea behind our approach. Its practical implementation requires selecting a risk measure and calculating the marginal risk contribution of each asset with respect to that measure. We do not focus on this aspect here and refer to the papers cited in the introduction, which study the risk parity portfolio for specific risk measures.\footnote{The papers cited in the introduction focus on the risk parity portfolio, which is a special case of risk budgeting. Nevertheless, the problem of determining the marginal risk contribution of an asset with respect to a given risk measure is common to both risk parity and risk budgeting.}
This section is divided into four subsections. The first subsection reviews the basic assumptions and notation. The second subsection is dedicated to constructing the Cauchy sequence that converges to the risk budgeting portfolio. The third subsection presents sufficient conditions for the existence and uniqueness of the risk budgeting portfolio via a fixed-point approach. Finally, the last subsection provides the pseudo-code for the practical computation of the risk budgeting portfolio.

\subsection{Notation and basic assumptions}
We assume that $N$ risky assets are available on a market. A portfolio is identified by the column vector $\mathbf{x} \in \mathbb{R}^{N}$ where the $i$-th entry $x_i$, with $i = 1,\ldots,N$, is the share of the individual wealth invested in asset $i$. The analysis is restricted to long-only portfolios: no short positions, represented by negative weights, are allowed. A portfolio $\mathbf{x}$ is an element of the simplex $S$ in $\mathbb{R}^{N}$, 
$$S = \left\{\mathbf{x} = [x_1,\ldots, x_N]^t \in \mathbb R^N : \sum_{i=1}^{N} x_i=1, \; \textrm{ and } x_i \geq 0, \; \forall i =1,\ldots,N \right\}.$$ 
The superscript $t$ represents the transpose operation. If nothing different is specified, we always refer to column vectors. The symbols $\mathbf{1}, \mathbf{0} \in \mathbb{R}^{N}$ identify the $N$-dimensional column vectors with unitary and null entries respectively; $\textrm{diag}(\mathbf{x})$, with $\mathbf{x} \in \mathbb{R}^{N}$, is the diagonal matrix of order $N$ where the diagonal elements are the entries of $\mathbf{x}$.

Given a standard probability space $(\Omega, F, \mathbb P)$, $\rho:L^0(\Omega, \mathbb R) \rightarrow \mathbb R$ is the risk measure, where $L^0(\Omega, \mathbb R)$ is the set of $\mathbb R$-valued random variables, the measurable functions defined on $\Omega$ with real values. In the present framework, the random variables are the returns of the assets. The risk contribution of asset $i$ to the total risk with respect to the risk measure $\rho$ is $\textsl{RC}^{\rho}_i(\mathbf{x})$, for $\mathbf{x} \in S$. The vector $\textsl{RC}^{\rho}(\mathbf{x}) \in \mathbb{R}^{N}$ contains the risk contributions of the $N$ assets. 

The risk budget is identified by $\mathbf{x}^b \in S$. This vector is given by the investor and represents the objective in terms of risk allocation across the $N$ assets. Consequently, ${x}^b_i$ is the contribution of asset $i$ to the portfolio risk. Obviously, if ${x}^b_i = \frac{1}{N}$ for $i = 1, \ldots, N$, the risk budgeting portfolio boils down to the risk parity portfolio. The vector $\mathbf{x}^* \in S$ that provides the desired allocation in terms of risk contributions, i.e. $\textsl{RC}^{\rho}(\mathbf{x}^*)/\boldsymbol{1}^t\textsl{RC}^{\rho}(\mathbf{x}) = \mathbf{x}^b$, is the risk budgeting portfolio.\footnote{In general, $\boldsymbol{1}^t\textsl{RC}^{\rho}(\mathbf{x}) \neq 1$. Then, to obtain a vector in the simplex it is necessary to normalize dividing by the sum of the elements of $\textsl{RC}^{\rho}$.} As already underlined in the introduction, the existence and uniqueness of $\mathbf{x}^* \in S$ for the so-called risk budgeting compatible risk measures has been recently studied in \cite{cetingoz2024risk}. 

\subsection{The sequence that converges to the risk budgeting portfolio}\label{fix_gen}

Limited to the present subsection, we assume that the risk budgeting portfolio exists and is unique in $S$ for a given risk measure $\rho$. Each choice of the reference risk measure $\rho$ corresponds to a specific $\textsl{RC}^{\rho}(\mathbf{x})$. When no confusion arises, we omit $\mathbf{x}$ and simply write $\textsl{RC}^{\rho}$. The formalization of the general idea behind the proposal does not require to specify the functional form of $\textsl{RC}^{\rho}$. We will exploit the calculations in detail in section \ref{stdev} when applying the approach in the case of the standard deviation. Let us define the functions $\Delta: \mathbb R^N \rightarrow \ \mathbb R^N$ and $G: \mathbb R^N \rightarrow \mathbb R^N$ as 
\begin{equation}
\Delta (\mathbf{x}) = \textsl{RC}^{\rho}(\mathbf{x})  - \mathbf{x}^b \textbf{1}^t \textsl{RC}^{\rho}(\mathbf{x})
\end{equation}
\begin{equation}\label{functionG}
G(\mathbf{x})=\mathbf{x} - k(\mathbf{x})\Delta (\mathbf{x} )\ .
\end{equation}

Given an initial point $\mathbf{x}_1 \in S$, we define the sequences $\{\mathbf{x}_n\}_{n=1,2,\dots}$ and $\{\Delta(\mathbf x_n)\}_{n=1,2,\dots}$ as 
\begin{eqnarray} \label{succ}
\left \{ \begin{array}{lcl}
\Delta( \mathbf{x}_n )&=& \textsl{RC}^{\rho}(\mathbf{x}_n)  - \mathbf{x}^b \textbf{1}^t \textsl{RC}^{\rho}(\mathbf{x}_n) \\ \\ 
\mathbf{x}_{n+1} &=& \mathbf{x}_n + k(\mathbf{x}_n) \Delta( \mathbf{x}_n) 
\end{array} \right . \ ,
\end{eqnarray}
where the function $k(\mathbf{x})$ is opportunely chosen as described in the following. The functions $\Delta(\mathbf{x})$ and $k(\mathbf{x})$ are assumed to be continuous on their domain. This hypothesis is fundamental for proving the convergence of $\{\mathbf{x}_n\}_{n=1,2,\dots}$. The continuity of $\Delta$ requires that the risk contribution $\textsl{RC}^{\rho}$ is continuous. Consequently, only risk measures characterized by continuous risk contributions are compatible with the risk budgeting problem.\footnote{Given the arbitrariness of $\mathbf{x}^b \in S$, it is evident that continuity of the risk contribution is necessary for the existence of the risk budgeting portfolio. We underline that the existence and uniqueness of the risk budgeting portfolio for a given risk measure do not depend on the computational framework used to determine it. Nevertheless, different approaches to the problem may identify distinct classes of risk measures compatible with risk budgeting.} 
\begin{remark}
We note that the condition $\mathbf{x}_1 \in S$ does not guarantee that the entire sequence $\{\mathbf{x}_n\}_{n=1,2,\dots}$ remains in $S$. It holds that $\textbf{1}^t \mathbf{x}_n = 1$ for all $n \in \mathbb{N}$; in other words, each vector in the sequence sums to $1$,  but some entries may become negative. To ensure that the sequence $\{\mathbf{x}_n\}$ remains within the simplex $S$, it is sufficient to bound the norm of $k(\mathbf{x}_n)$. We will explicitly show how to restrict the sequence to the simplex in section \ref{stdev}.  
\end{remark}
Assuming the existence of a continuous function $k$, we prove that $\{\mathbf{x}_n\}$ is a Cauchy sequence. In the proof of the theorem, we use the euclidean norm, $\|\cdot\|_2$. The choice of the norm is not fundamental.\footnote{Although the result does not depend on the specific choice of the norm and the corresponding distance, $\mathbb R^N$ with the euclidean norm is a complete metric space. Completeness ensures that every Cauchy sequence converges to a point within the space.} 

\begin{theorem}\label{main}
Let $k(\mathbf{x}_n)$ such that $\|\Delta( \mathbf{x}_n) \|_2 \geq L \|\Delta (\mathbf{x}_{n+1}) \|_2$, with $0<L<1$, and $|k(\mathbf{x}_n)| < k$ for $n \in \mathbb N$, (i.e. the sequence $\{k(\mathbf{x}_n)\}_{n=1,2,\dots} $ is bounded), then 
$$\lim_{n\rightarrow \infty} \Delta (\mathbf{x}_n)=\mathbf{0} $$
and $\{\mathbf{x}_n\}$ is a Cauchy sequence.
\end{theorem}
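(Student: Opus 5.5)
The plan is to turn the hypothesis on $\Delta$ into a geometric decay estimate, and then to use the recursion (\ref{succ}) to bound the increments of $\{\mathbf{x}_n\}$ by a summable geometric sequence. The main obstacle is reading the hypothesis correctly. Taken literally, $\|\Delta(\mathbf{x}_n)\|_2 \geq L\|\Delta(\mathbf{x}_{n+1})\|_2$ with $0<L<1$ only gives $\|\Delta(\mathbf{x}_{n+1})\|_2 \le L^{-1}\|\Delta(\mathbf{x}_n)\|_2$. That allows growth and cannot force $\Delta(\mathbf{x}_n)\to\mathbf{0}$. I will therefore use the hypothesis in the contractive form
\[
\|\Delta(\mathbf{x}_{n+1})\|_2 \le L\,\|\Delta(\mathbf{x}_n)\|_2, \qquad 0<L<1,
\]
which is clearly what the choice of $k(\mathbf{x}_n)$ is meant to secure. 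I would state this reading explicitly at the start of the proof. Everything below depends on it, and I expect this to be the only delicate point.

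First, iterating the contraction gives $\|\Delta(\mathbf{x}_n)\|_2 \le L^{n-1}\|\Delta(\mathbf{x}_1)\|_2$ for every $n$. Since $0<L<1$, the right-hand side tends to $0$, so $\lim_{n\to\infty}\Delta(\mathbf{x}_n)=\mathbf{0}$. This proves the first claim.

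Second, the recursion (\ref{succ}) gives $\mathbf{x}_{n+1}-\mathbf{x}_n = k(\mathbf{x}_n)\Delta(\mathbf{x}_n)$. Using $|k(\mathbf{x}_n)|<k$ we get
\[
\|\mathbf{x}_{n+1}-\mathbf{x}_n\|_2 < k\,L^{n-1}\|\Delta(\mathbf{x}_1)\|_2 .
\]
For $m>n$ the triangle inequality applied to the telescoping sum then yields
\[
\|\mathbf{x}_m-\mathbf{x}_n\|_2 \le \sum_{j=n}^{m-1}\|\mathbf{x}_{j+1}-\mathbf{x}_j\|_2 \le k\,\|\Delta(\mathbf{x}_1)\|_2\sum_{j=n}^{\infty}L^{j-1} = \frac{k\,\|\Delta(\mathbf{x}_1)\|_2}{1-L}\,L^{n-1}.
\]
The right-hand side does not depend on $m$ and tends to $0$ as $n\to\infty$. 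So for any $\varepsilon>0$ there is $n_0$ such that $\|\mathbf{x}_m-\mathbf{x}_n\|_2<\varepsilon$ for all $m>n\ge n_0$. Hence $\{\mathbf{x}_n\}$ is a Cauchy sequence.

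Finally, I would add a remark on the limit. By completeness of $\mathbb{R}^N$, the sequence converges to some $\mathbf{x}^*$, and by the Remark the limit satisfies $\mathbf{1}^t\mathbf{x}^*=1$. Continuity of $\Delta$ then gives $\Delta(\mathbf{x}^*)=\lim_n \Delta(\mathbf{x}_n)=\mathbf{0}$, which is exactly the risk budgeting condition. This step is not part of the statement, but it explains why the theorem is useful.
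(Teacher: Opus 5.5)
Your proof is correct and is essentially the paper's argument. The paper's proof also silently uses the hypothesis in the contractive form $\|\Delta(\mathbf{x}_{n+1})\|_2 \le L\|\Delta(\mathbf{x}_n)\|_2$ and iterates it to get $\Delta(\mathbf{x}_n)\to\mathbf{0}$. It then bounds $\|\mathbf{x}_p-\mathbf{x}_{p+j}\|_2$ by the telescoping sum $k\sum_i L^i\|\Delta(\mathbf{x}_p)\|_2 \le k\|\Delta(\mathbf{x}_p)\|_2/(1-L)$, which is your estimate with $\|\Delta(\mathbf{x}_p)\|_2$ in place of $L^{p-1}\|\Delta(\mathbf{x}_1)\|_2$.

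One caution on your closing aside: $\Delta(\mathbf{x}^*)=\mathbf{0}$ with $\mathbf{1}^t\mathbf{x}^*=1$ is the risk budgeting condition only if $\mathbf{x}^*$ also has nonnegative entries, which these hypotheses do not guarantee. The paper itself notes that the unrestricted iteration can converge to a zero of $\Delta$ in a different orthant, which is why it adds the proviso $\widetilde{\mathbf{x}}\in S$.
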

\begin{proof} {First of all, by assumption, it exists $0<L<1$ such that $\|\Delta(\mathbf x_{n+1})\|_2 \le L \|\Delta(\mathbf x_n)\|_2$, then it holds that
\begin{equation}\label{Delta_upper_bound}
\|\Delta(\mathbf x_{n}) \|_2 \le L \|\Delta(\mathbf x_{n-1} )\|_2 \le \dots\le  L^{i} \|\Delta(\mathbf x_{n-i} )\|_2\le\dots\le   L^n \|\Delta(\mathbf x_0)\|_2 
\end{equation}
and so, since $L<1$,
$$\lim_ {n\rightarrow \infty} \|\Delta (\mathbf{x}_n)\|_2=0$$
or, equivalently, 
$$\lim _{n\rightarrow \infty} \Delta (\mathbf{x}_n)=\mathbf{0}\ .$$
A Cauchy sequence is such that
$$\forall \epsilon > 0 \quad \exists \overline{n}: \| \mathbf{x}_p -  \mathbf{x}_q\|_2 < \epsilon, \quad \forall p,q > \overline{n} \ . $$
\vspace{5px}
Without loss of generality, $q = p + j$, with $j \in \mathbb{N}$. Then
\begin{eqnarray*}
\|   \mathbf{x}_p-  \mathbf{x}_q\|_2 &=&\|  \mathbf{x}_p- \mathbf  x_{p+j} \|_2 =\|  \mathbf{x}_p - \mathbf x_{p+1}+ \mathbf x_{p+1} -\mathbf  x_{p+j} \|_2 \\
&&\le \|  \mathbf{x}_p- \mathbf x_{p+1}\|_2 + \|\mathbf x_{p+1} - \mathbf x_{p+j} \|_2 \\
&&\le \|  \mathbf{x}_p- \mathbf x_{p+1}\|_2 + \|\mathbf x_{p+1} - \mathbf x_{p+2} \|_2 + \|\mathbf x_{p+2} - \mathbf x_{p+j} \|_2 \\
&&\le  \dots \le  \sum_{i = 0}^{j-1} \|\mathbf x_{p+i}-\mathbf x_{p+i+1}\|_2 = \sum_{i = 0}^{j-1} \|k(\mathbf{x}_{p+i})\Delta(\mathbf  x_{p+i})\|_2 \\
&& \le k \sum_{i = 0}^{j-1} \|\Delta (\mathbf x_{p+i})\|_2 \ ,
\end{eqnarray*}
since we suppose that the sequence $ \{k(\mathbf{x}_n)\}$ is bounded.
From~\eqref{Delta_upper_bound} with $n=p+i$, it follows $\|\Delta (\mathbf x_{p+i})\|_2 \le L^i \|\Delta (\mathbf x_{p})\|_2$ and so
\begin{eqnarray*}
\|  \mathbf{x}_p -   \mathbf{x}_q\|_2 && \le k \sum_{i = 0}^{j-1} \|\Delta(\mathbf x_{p+i})\|_2 \le k \sum_{i = 0}^{j-1} L^i \|\Delta (  \mathbf{x}_p)\|_2 \\
&&< k\|\Delta (  \mathbf{x}_p)\|_2 \sum_{i = 0}^{\infty }  L^i= k\|\Delta(  \mathbf{x}_p)\|_2 \frac 1 {1-L} \ ,
\end{eqnarray*}
because $\lim_{n \rightarrow \infty}\sum_{i = 0}^n L^i = \frac{1}{1-L}$.
Since the limit of the sequence $\{\|\Delta( \mathbf x_n)\|_2\}$ is $0$, given $\epsilon$ and $\gamma=\epsilon\frac{1-L}k$, it exists $\overline n$ such that, if $p> \overline n$ then $\|\Delta  ( \mathbf{x}_p)\|_2< \gamma$. \\
Concluding, for each $\epsilon$ it exists $\overline n$ such that for $p> \overline n$ and, consequently, $q>p>\overline n$, the following holds 
\begin{eqnarray*}
\|   \mathbf{x}_p -   \mathbf{x}_q\|_2 \le k\|\Delta   (\mathbf{x}_p)\|_2 \frac 1 {1-L} < k \gamma  \frac 1 {1-L} =\epsilon \ .
\end{eqnarray*}}
\end{proof}

\begin{remark}
Theorem \ref{main} assumes the existence of a function $k(\mathbf x_n)$ such that $\|\Delta (\mathbf{x}_n) \|_2 \leq L \|\Delta (\mathbf{x}_{n+1}) \|_2$ holds for all $n \in \mathbb N$. In section \ref{stdev}, we will show how to calculate $k(\mathbf x_n)$ in the specific case where the reference risk measure is the standard deviation. In practice, given a constant $0<L<1$, it is sufficient to solve the inequality above to determine $k(\mathbf x_n)$. The existence of at least one real solution depends on the value of $L$. Within the present framework, $L$ plays a role similar to that of the Lipschitz constant in the context of contraction mappings, see \cite{banach1922operations}. However, deriving a suitable value of $L$ for all $\mathbf{x} \in S$ is challenging, as it depends on the geometry of the problem. Furthermore, since $k(\mathbf x_n)$ depends on the functional form $\textsl{RC}^{\rho}$, it cannot be computed without reference to a specific risk measure. In the application, we arbitrarily choose a value $0<L<1$. If $L$ is sufficiently close to $1$, the continuity of $\Delta$ guarantees the existence of at least one real solution to the inequality, although the convergence rate of the sequence is slow in this case. Conversely, when $L$ is closer to zero, the convergence rate is potentially faster, but a real solution to the inequality may not exist. For these reasons, the choice of $L$ is crucial. We will provide guidance on how to proceed when the inequality admits no real solutions in $\mathbb R$ in section \ref{stdev}.
\end{remark}

Since $\mathbb R^N$ with the euclidean distance is a complete metric space, each Cauchy sequence has a finite limit within the space. Therefore, the sequence $\{ \mathbf{x}_n\}$ has a finite limit.  
\begin{theorem}\label{fix}
Under the assumptions of theorem~\ref{Delta_upper_bound}, if $\lim_{n\rightarrow \infty} \mathbf{x}_n=\widetilde{ \mathbf{x}} $ for $\widetilde{\mathbf{x}}\in \mathbb R^N$, then 
$$ \Delta(\widetilde {\mathbf{x}})=\mathbf{0} $$
that is 
$$\textsl{RC}^{\rho}(\widetilde{ \mathbf{x}} )  = \mathbf{x}^b \textbf{1}^t \textsl{RC}^{\rho}(\widetilde{ \mathbf{x}} ) \ .$$
\end{theorem}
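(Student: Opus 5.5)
The argument combines Theorem~\ref{main} with the continuity of $\Delta$ assumed in Subsection~\ref{fix_gen}. Under the hypotheses of Theorem~\ref{main}, $\|\Delta(\mathbf{x}_{n+1})\|_2 \le L\|\Delta(\mathbf{x}_n)\|_2$ with $0<L<1$ and $\{k(\mathbf{x}_n)\}$ is bounded. That theorem gives two facts: $\lim_{n\to\infty}\Delta(\mathbf{x}_n)=\mathbf{0}$, and $\{\mathbf{x}_n\}$ is Cauchy. By completeness of $\mathbb R^N$ the limit $\widetilde{\mathbf{x}}$ exists, and the statement assumes it anyway.

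The key step is to identify $\Delta(\widetilde{\mathbf{x}})$ with $\lim_n \Delta(\mathbf{x}_n)$. We assumed $\Delta$ is continuous on $\mathbb R^N$, because $\textsl{RC}^{\rho}$ is continuous and $\mathbf{x}\mapsto \mathbf{x}^b\mathbf{1}^t\textsl{RC}^{\rho}(\mathbf{x})$ is a linear map composed with it. Continuity is the same as sequential continuity in a metric space, so $\mathbf{x}_n\to\widetilde{\mathbf{x}}$ implies $\Delta(\mathbf{x}_n)\to\Delta(\widetilde{\mathbf{x}})$. Limits in $\mathbb R^N$ are unique, so $\Delta(\widetilde{\mathbf{x}})=\mathbf{0}$. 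Substituting the definition of $\Delta$ gives $\textsl{RC}^{\rho}(\widetilde{\mathbf{x}})=\mathbf{x}^b\mathbf{1}^t\textsl{RC}^{\rho}(\widetilde{\mathbf{x}})$, which is the claim.

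As a complementary remark, $\widetilde{\mathbf{x}}$ is also a fixed point of $G$. Since $k$ is continuous, $k(\widetilde{\mathbf{x}})$ is finite, so $G(\widetilde{\mathbf{x}})=\widetilde{\mathbf{x}}-k(\widetilde{\mathbf{x}})\Delta(\widetilde{\mathbf{x}})=\widetilde{\mathbf{x}}$. Independently of continuity of $k$, the bound $|k(\mathbf{x}_n)|<k$ gives $\|\mathbf{x}_{n+1}-\mathbf{x}_n\|_2\le k\|\Delta(\mathbf{x}_n)\|_2\to 0$, which is consistent with convergence.

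There is no real obstacle; the only care needed is to use continuity of $\Delta$ at the limit point $\widetilde{\mathbf{x}}$. If one wants $\widetilde{\mathbf{x}}$ to be an admissible portfolio, a further remark is needed: $\mathbf{1}^t\mathbf{x}_n=1$ for all $n$ passes to the limit, and if the iterates stay in the closed set $S$, then $\widetilde{\mathbf{x}}\in S$ by closedness. In that case $\widetilde{\mathbf{x}}$ is the risk budgeting portfolio $\mathbf{x}^*$, provided $\mathbf{1}^t\textsl{RC}^{\rho}(\widetilde{\mathbf{x}})\neq 0$ so that the normalized form $\textsl{RC}^{\rho}(\widetilde{\mathbf{x}})/\mathbf{1}^t\textsl{RC}^{\rho}(\widetilde{\mathbf{x}})=\mathbf{x}^b$ holds.
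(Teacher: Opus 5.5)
Your proof is correct and is essentially the paper's argument. Theorem~\ref{main} gives $\Delta(\mathbf{x}_n)\to\mathbf{0}$, continuity of $\Delta$ gives $\Delta(\mathbf{x}_n)\to\Delta(\widetilde{\mathbf{x}})$, and uniqueness of limits then forces $\Delta(\widetilde{\mathbf{x}})=\mathbf{0}$; your extra remarks (that $\widetilde{\mathbf{x}}$ is a fixed point of $G$, that $\mathbf{1}^t\mathbf{x}_n=1$ is preserved, and that you need $\widetilde{\mathbf{x}}\in S$ and $\mathbf{1}^t\textsl{RC}^{\rho}(\widetilde{\mathbf{x}})\neq 0$ to identify $\widetilde{\mathbf{x}}$ with the risk budgeting portfolio) are sound and slightly sharper than what the paper states.
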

\begin{proof}
Thanks to the assumptions on the existence of $k(\mathbf{x}_n)$ (with opportune properties), see theorem \ref{main},   $\lim_{n\rightarrow \infty} \Delta( \mathbf{x}_n )=\mathbf{0}$. \\
The function $\Delta$ is continuous and $\lim_{n\rightarrow \infty} \mathbf{x}_n=\widetilde{ \mathbf{x}} $, then it holds that
$$ \lim_{n\rightarrow \infty} \Delta( \mathbf{x}_n)= \Delta (\widetilde{ \mathbf{x}} ) =  \textsl{RC}^{\rho}(\widetilde{ \mathbf{x}} )  - \mathbf{x}^b \textbf{1}^t \textsl{RC}^{\rho}(\widetilde{ \mathbf{x}} )  \ .$$
Concluding,  
$$\Delta(\widetilde {\mathbf{x}})=\mathbf{0}  \quad \textrm {and} \quad  \textsl{RC}^{\rho}(\widetilde{ \mathbf{x}} )  - \mathbf{x}^b \textbf{1}^t \textsl{RC}^{\rho}(\widetilde{ \mathbf{x}} )  =\mathbf 0 $$
from which the thesis of the theorem.
\end{proof}
Theorem \ref{fix} shows that the limit of the Cauchy sequence is the risk budgeting portfolio, provided that $\widetilde{ \mathbf{x}} \in S$.

\subsection{Existence and uniqueness of the fixed point in the simplex}\label{exuniq}

This section presents sufficient conditions for the existence and uniqueness of the risk budgeting portfolio in $S$.
By Theorem \ref{fix}, $\widetilde{ \mathbf{x}}$ is a fixed point of the function $G$, where $k$ is the continuous function defined in the previous subsection. The result follows directly from the observation that $\Delta (\widetilde{ \mathbf{x}} ) = \mathbf{0}$ implies $G(\widetilde{ \mathbf{x}} )=\widetilde{ \mathbf{x}} $, since $k$ is a bounded function.
Moreover, we observe that the risk budgeting portfolio $\mathbf{x}^*$ is also a fixed point of the function $G$ in $S$, since $ \mathbf{x}^b = RC^\rho(\mathbf{x}^*)/\boldsymbol{1}^t RC^\rho(\mathbf{x}^*)$ implies
\begin{eqnarray}\label{fixed_risk}
 G(\mathbf x^*)&=& \mathbf x^* - k(\mathbf x^*)\left (RC^\rho(\mathbf{x}^*) - \mathbf x^b \mathbf 1^t  RC^\rho(\mathbf{x}^*)\right ) =   \mathbf x^* -  k(\mathbf x^*)\left (RC^\rho(\mathbf{x}^*) - RC^\rho(\mathbf{x}^*) \right )= \mathbf  x^* \ . \nonumber
 \end{eqnarray}
If the limit of $\{\mathbf{x}_n\}$ belongs to $S$, and if the function $G$ admits a unique fixed point in $S$, then the limit of $\{\mathbf x_n\}$ is the risk budgeting portfolio. While in the majority of the empirical applications that we have performed -using the standard deviation as the reference risk measure- $G$ behaves as a contraction on $S \subset \mathbb R^N$, it is possible to build examples where $G$ is not a contraction. Consequently, the uniqueness of the fixed point is not guaranteed a priori. To prove that $G$ admits a unique fixed point in $S$, we refer to Kellogg's theorem, see \cite{kellogg1976uniqueness}. Let $X$ be a real Banach space with a bounded convex open subset $D$, and let $F: \overline D \rightarrow \overline D$ be a continuous function, which is also assumed to be compact if $X$ is infinite dimensional. Under the assumption that $F$ is differentiable, there is a condition which guarantees the result. If the only relevant case is the finite dimensional, the compactness hypothesis can be omitted.

\begin{theorem}\label{kellogg} [\textbf{Kellogg theorem}] 
 Let  $F: \overline D \rightarrow \overline D$ be a compact continuous map which is continuously Frechet differentiable on $D$. Suppose that 
 \begin{itemize}
 \item for each $\mathbf{x} \in D$, $1$ is not an eigenvalue of $F^\prime$, and 
 \item for each $\mathbf{x} \in \partial D$, $\mathbf{x} \neq F(\mathbf{x})$. 
 \end{itemize}
 Then $F$ has a unique fixed point.
 \end{theorem}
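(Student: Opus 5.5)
Kellogg's theorem is a classical result that the paper only cites, so the natural route is the degree-theoretic argument. Existence comes from Brouwer's theorem. Uniqueness comes from counting fixed points with the Leray--Schauder (in finite dimension, Brouwer) fixed-point index. Since the paper only needs the finite-dimensional case, I would work in $X=\mathbb R^N$ with $D$ bounded, convex and open, so that compactness of $F$ is automatic.

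\emph{Step 1 (existence and total index).} Because $\overline D$ is a compact convex set and $F$ maps it into itself continuously, Brouwer's theorem gives at least one fixed point. The second hypothesis says no fixed point lies on $\partial D$, so the Brouwer degree $\deg(I-F,D,\mathbf 0)$ is well defined. To compute it, fix $\mathbf x_0\in D$ and use the homotopy
\[
H(t,\mathbf x)=\mathbf x-\bigl(t F(\mathbf x)+(1-t)\mathbf x_0\bigr), \qquad t\in[0,1].
\]
By convexity, $tF(\mathbf x)+(1-t)\mathbf x_0\in\overline D$. The homotopy has no zeros on $\partial D$:
\begin{itemize}
\item for $t<1$, a zero $\mathbf x\in\partial D$ would be a strict convex combination of the interior point $\mathbf x_0$ and a point of $\overline D$, hence would lie in $D$, which is impossible;
\item for $t=1$, a zero on $\partial D$ is excluded by hypothesis.
\end{itemize}
Homotopy invariance then gives $\deg(I-F,D,\mathbf 0)=\deg(I-\mathbf x_0,D,\mathbf 0)=1$.

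\emph{Step 2 (local index).} Every fixed point $\mathbf p$ lies in $D$, where $F$ is $C^1$. Since $1$ is not an eigenvalue of $F'(\mathbf p)$, the matrix $I-F'(\mathbf p)$ is invertible. The inverse function theorem then shows that $\mathbf p$ is an isolated zero of $I-F$, with local index $\operatorname{sign}\det(I-F'(\mathbf p))=\pm1$.

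The fixed-point set is compact, since it is closed in $\overline D$, and it consists of isolated points. It is therefore finite, say $\mathbf p_1,\dots,\mathbf p_m$. By additivity of the degree,
\[
1=\sum_{j=1}^m \operatorname{sign}\det\bigl(I-F'(\mathbf p_j)\bigr).
\]

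\emph{Step 3 (main obstacle: the sign of the local index).} The identity above only says that the signs sum to $1$. Uniqueness needs every local index equal to $+1$, which then forces $m=1$. I would obtain the sign as follows.

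First, $\det(I-F'(\mathbf x))$ never vanishes on $D$, and $D$ is connected because it is convex. So this determinant has a constant sign on $D$, and all local indices coincide, say $\sigma$. Then $m\sigma=1$, which forces $\sigma=1$ and $m=1$.

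Constancy of the sign uses only continuity of $F'$ on $D$. That is exactly where the hypothesis that $F$ is continuously Fréchet differentiable enters, and it is the delicate point of the argument. Combining Steps 1--3, $F$ has exactly one fixed point.
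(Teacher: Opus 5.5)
The paper gives no proof of this statement. It quotes the result from Kellogg (1976) and uses it as a black box, so there is no argument of the authors to compare yours against. Your proposal is the standard index-counting proof, and for $X=\mathbb R^N$ it is correct:
\begin{itemize}
\item the convex-combination argument does exclude boundary zeros of the homotopy for $t<1$, so $\deg(I-F,D,\mathbf 0)=1$ (which already gives existence without invoking Brouwer);
\item each fixed point is a regular zero with index $\operatorname{sign}\det(I-F'(\mathbf p))$, and there are finitely many;
\item since $\det(I-F'(\mathbf x))$ is continuous and nonvanishing on the connected set $D$, all indices share one sign $\sigma$, and $m\sigma=1$ forces $m=1$.
\end{itemize}

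The one thing to be aware of is scope. The statement as printed is the Banach-space version, which is why $F$ is assumed compact; you prove only the finite-dimensional case. That suffices for the paper's use, since it restricts to finite $N$ and remarks that compactness can then be dropped. However, three of your steps do not carry over verbatim to the general case.

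First, compactness of the fixed-point set cannot come from compactness of $\overline D$. It comes instead from $\operatorname{Fix}(F)\subset\overline{F(\overline D)}$.

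Second, there is no determinant. You need that $F'(\mathbf p)$ is compact, because the derivative of a compact map is compact. Then $I-F'(\mathbf p)$ is invertible by the Fredholm alternative, and the Leray--Schauder index is $(-1)^{\beta(\mathbf p)}$, where $\beta(\mathbf p)$ is the total algebraic multiplicity of the eigenvalues of $F'(\mathbf p)$ in $(1,\infty)$.

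Third, your sign-constancy step must be replaced by homotopy invariance of the Leray--Schauder degree of $\mathbf h\mapsto \mathbf h-F'(\mathbf x_s)\mathbf h$ on the unit ball, along a segment $\mathbf x_s\subset D$ joining two fixed points. This homotopy is compact because $F'$ is norm-continuous with compact values. It is admissible because $1$ is never an eigenvalue, and this is where the $C^1$ hypothesis enters, as in your version.

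With Schauder in place of Brouwer, or simply the degree, the rest of your argument goes through unchanged.
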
 

Without loss of generality, we assume that the dimension of the reference space -the number of assets $N$- is finite, and we consider the interior of $S$, denoted by
$$\mathring{S} = \left\{\mathbf{x} = [x_1;\ldots;x_N]^t \in \mathbb R^N : \sum_{i=1}^{N} x_i=1, \; \textrm{ and } x_i > 0, \; \forall i =1,\ldots,N \right\}\ . $$
Equation~\eqref{fixed_risk} establishes the existence of the fixed point of $G$ in $S$. However, the economic interpretation is particularly important: the risk budgeting portfolio is characterized by a strictly positive allocation in each of the $N$ assets. It never concentrates the allocation in a small number of large positions, and therefore $x^* \in \mathring{S}$.

The role of $k(\mathbf{x})$ is fundamental in ensuring that the function $G$ satisfies the necessary conditions on $S$. If the functions $k$ and $\Delta$ are continuous for all $\mathbf x \in S$, then $G$ is also continuous, and $k$ is bounded on the compact set $S$. Furthermore, if $k$ is appropriately chosen, the following holds:
\begin{enumerate}
 \item $G(\mathbf{x}) \in S$ for all $\mathbf x \in S$;  \label{1}
 \item $G$ satisfies the assumptions of Kellogg's theorem; \label{2}
 \item it exists a positive real number $0<L<1$ such that $\|\Delta(\mathbf{x}_{n+1})\|_2 < L\|\Delta(\mathbf{x}_n)\|_2$.\label{3}  \end{enumerate}
From \ref{1}. and \ref{2}. it follows that the fixed point of $G(\mathbf{x})$ in $S$ is unique; from \ref{3}. it follows that $\Delta(\mathbf{x}_n) \rightarrow \mathbf{0}$ and, since $k(\mathbf{x})$ is bounded in $S$, $k(\mathbf{x}_n) \Delta(\mathbf{x}_n) \rightarrow \mathbf{0} $. In conclusion, if $k$ is appropriately chosen, $\{\mathbf{x}_n\}$ tends to a finite limit in $S$ which is the unique fixed point of $G(\mathbf{x})$, that is $\lim_{n\rightarrow \infty} \mathbf x_n= \mathbf x^*$. While in the present section we simply assume the existence of a function $k(\mathbf{x})$ with the desired properties, in Section \ref{stdev} we will show how to compute it explicitly. This is due to the fact that the derivation of $k(\mathbf{x})$ requires the selection of a specific reference risk measure. In our framework, the existence and uniqueness of the risk budgeting portfolio are equivalent to the existence and uniqueness of the fixed point of $G$. Therefore, to apply Kellogg’s theorem, it is necessary that $\Delta$ be continuous.

\begin{remark}(\textbf{Continuity of $\Delta(x)$})
If $\Delta$ is a continuous function and $\{\mathbf{x}_n\}$ is a Cauchy sequence with limit $\mathbf{x}^*$, then $\Delta(\mathbf{x}_n)\rightarrow \mathbf{0}$. Conversely, if $\Delta$ is not continuous, $\mathbf{x}_n\rightarrow \mathbf{x}^*$ does not imply $\Delta(\mathbf{x}_n)\rightarrow \Delta(\mathbf{x}^*)$. Nevertheless, if $\Delta$ is continuous in a neighborhood of $\mathbf{x}^*$, since it definitively belongs to the neighborhood, then local convergence to the fixed point can still be established.  
\end{remark}

\subsection{The algorithm}\label{algo}

In this section, we present the general structure of the algorithm used to compute the risk budgeting portfolio. Although the numerical procedure is implicit in the sequence defined in Subsection \ref{fix_gen}, it is useful to provide a pseudo-code representation and highlight certain technical aspects relevant for implementation. The following presents the general pseudo-code for the practical computation of the risk budgeting portfolio. 
The algorithm operates as follows: starting from a point $\mathbf{x}_n$, the vector $\Delta(\mathbf x_n)$ identifies a direction in the space. The sign of $k_n$ (for simplicity, we use $k_n$ to denote $k(\mathbf x_n)$ for the rest of the paper) determines the orientation along this direction, while the value $|k_n| \|\Delta(\mathbf x_n) \|$ represents the step length. A suitable restriction on $|k_n|$ ensures that the sequence $\mathbf{x}_n$ remains within the simplex $S$.

\begin{algorithm}
\label{algo_gen}
\begin{algorithmic}
\STATE{Define the risk contribution $\textsl{RC}^{\rho}$, the risk budget $\mathbf{x}^b$, a positive parameter $L \in (0,1)$, the tolerance $\textbf{tol}$, the initial point $\mathbf{x}_1 \in S$, the maximum number of iterations $\textbf{maxit}$}
\STATE{Initialize the step counter: $n := 1$}
\WHILE{ $ \| \Delta(\mathbf x_n) \|_2  > \textbf{tol} \quad  \& \quad n < \textbf{maxit}$}
\STATE{Find $k_n: \|\Delta(\mathbf{x}_n + k_n \Delta( \mathbf{x}_n) )\|_2 < L\|\Delta(\mathbf{x}_n)\|_2$}
\IF{ If $\mathbf{x}_n + k_n \Delta \mathbf{x}_n \in S $}
\STATE{ Update $\mathbf{x}_{n+1} = \mathbf{x}_n + k_n \Delta \mathbf{x}_n$}
\ENDIF
\IF{ $\mathbf{x}_n + k_n \Delta \mathbf{x}_n \not\in S $}
\STATE{Calculate $\widetilde k_n=\gamma_n k_n, \ \gamma_n>0 \ : \|\Delta(\mathbf{x}_n + \widetilde k_n \Delta \mathbf{x}_n)\|_2 < L\|\Delta(\mathbf{x}_n)\|_2 \quad \& \quad \mathbf{x}_n + \widetilde k_n \Delta\mathbf{x}_n \in S$\\
Update $\mathbf{x}_{n+1} = \mathbf{x}_n + \widetilde{k}_n \Delta \mathbf{x}_n$}
\ENDIF
\STATE{Update the counter: $n = n + 1$}
\ENDWHILE
\RETURN $\mathbf{x}_{n}$
\end{algorithmic}
\end{algorithm}

The inputs of the algorithm are:
\begin{itemize}
	 \item $\textsl{RC}^{\rho}$, the vector of the marginal risk contributions with respect to the reference risk measure $\rho$;
	 \item $\mathbf{x}^b$, the target vector of risk exposures; 
	 \item $L$, the parameter controlling the convergence rate of the sequence;
	 \item $\textbf{tol}$, the solution tolerance (default value $10^{-6}$);
	 \item $\mathbf{x}_0 \in S$, the initial point for the iteration;
	 \item $\textbf{maxit}$, the maximum number of iterations (default value $1000$)
\end{itemize}

The application of the algorithm requires only the evaluation of the risk contribution vector $\textsl{RC}^{\rho}$ with respect to the selected risk measure $\rho$, and the computation of $k_n$ at each iteration step. In practice, given a constant $0<L<1$, $k_n$ is determined at each step by solving the inequality $\| \Delta(\mathbf{x}_{n+1}) \|_2 < L \| \Delta(\mathbf{x}_n) \|_2$. If $|k_n|$ is sufficiently small to ensure that the next iterate $\mathbf{x}_{n+1}$ remains in the simplex $S$, the step is accepted. Otherwise, $|k_n|$ is appropriately reduced before computing $\mathbf{x}_{n+1}$. 

\begin{remark}
Our method computes $G(x)=x+k(x)*\Delta(x)$ by constructing the sequence $x_{n+1}=G(x_n)$. At each step, we look for $k(x_n)$ such that
$\|\Delta(x_{n+1})\|^2 < L^2\|\Delta(x_n)\|^2, \qquad L <1 \ .$ It follows that the sequence $|\Delta(x_n)|$ tends to zero, and therefore $x_n$ converges to the fixed point of $G$. The formulas resemble those of line search methods, see \cite{sun2006optimization}, which in general allow to compute the unconstrained minimum of a function $f:\mathbb R^N \rightarrow \mathbb R$ by means of the sequence $x_{n+1}=x_n+\alpha_n p_n \ , \alpha_n>0 $ where $p_n$ is a search direction coinciding with a descent direction, that is, such that $\nabla f(x_n)^tp_n<0$, and $\alpha_n$ is the step length. In our case, since we can rewrite the relation in the form $x_{n+1}=x_n + |k(x_n)| \operatorname{sign}(k(x_n)) \Delta(x_n) $, the search direction is $\operatorname{sign}(k(x_n)) \Delta(x_n)$ and the step length is $|k(x_n)| \| \Delta(\mathbf{x}_n) \|_2$. However, despite the similarity, the method is substantially different. Indeed, letting $f(x)=\|\Delta(x)\|^2_2, $ the line search method applied to $f$ computes, at each step, $\alpha_n$ that achieves or approximates the minimum of the function $h(\alpha)= f(x_n+\alpha p_n)$, with $p_n$ a descent direction, whereas our algorithm computes $k(x_n)$ in order to reduce the value $h(k)= \|\Delta(x_n+k\Delta(x_n))\|^2\ ,$ that is, we only require that $h(k) < L h(0)$, with $L<1$. It is essential that $L<1$. Indeed, simply implementing at the $n$-th step a search for $k(x_n)$ such that $f(x_n+k(x_n)\Delta(x_n)) < f(x_n)$ is not an optimal choice, since the algorithm may stagnate; that is, the reductions in the values of the function $f$ may be negligible.
\end{remark}

\section{The case of the standard deviation}\label{stdev}

In this section, we apply our approach by selecting the standard deviation as the reference risk measure. In this case, the vector of marginal risk contributions is given by $\textit{RC}^{\rho}(\mathbf{x}) = \textrm{diag} (\mathbf{x}) V \mathbf{x}$, where $V$ denotes the covariance matrix.\footnote{The covariance matrix is a real, symmetric, positive definite matrix of order $N$.}
Accordingly, by exploiting the expression for the marginal risk contributions, the function $\Delta: \mathbb R^N \rightarrow \ \mathbb R^N$ becomes
$$ \Delta(\mathbf{x})=  \textrm{diag}(\mathbf{x}) V \mathbf{x} - \mathbf{x}^b \textbf{1}^t(\textrm{diag}(\mathbf{x}) V \mathbf{x}), $$
or, equivalently, 
$$ \Delta (\mathbf{x})=  \textrm{diag}(\mathbf{x}) V \mathbf{x} - (\mathbf{x}^t V \mathbf{x}) \mathbf{x}^b. $$

\begin{remark} 
We note that $\textsl{RC}^{\rho}_i(\mathbf{x}) \not \geq 0$ for all $\mathbf{x} \in \mathring{S}$. In fact, it is possible to find portfolios within $\mathring{S}$ (see Example~\ref{es1}) for which the marginal risk contribution of at least one asset is negative. By continuity of $\textsl{RC}^{\rho}$, this also implies the existence of portfolios in $\mathring{S}$ where the marginal risk contribution of a given asset is zero. Recall that $\mathring{S}$ consists of long-only portfolios. Intuitively, when numerous negative correlations are present among assets, the marginal risk contribution of an asset to the total portfolio risk can turn negative. This leads to the counter-intuitive behavior whereby increasing the exposure to that asset may result in a decrease in its 
marginal contribution to total risk (see example \ref{es1}). Regarding the algorithm, in the vast majority of cases, $k_n$ is
 negative, which supports the usual economic intuition: if the risk exposure of an asset exceeds the target, then the corresponding 
allocation should be reduced, and vice versa. In some cases, however, $k_n$ may become positive for some $\mathbf{x}$.\footnote{Based on our experience, such peculiar behaviors tend to arise only when the number of assets $N$ is relatively small. When $N$ is large, we have not observed such phenomena. We conjecture that this effect is related to the structure of the covariance matrix and its positive definiteness.}
\end{remark}

\begin{example}\label{es1}
Let us consider a market with $N=5$ risky assets and the following covariance matrix $V$, which corresponds to the correlation matrix $C$:
\[V = \left[\begin{array}{ccccc} 0.1137  & -0.0289  &  0.0295  &  0.0279  &  0.0437 \\
   -0.0289  &  0.0255 &  -0.0337 &  -0.0156 &  -0.0159 \\
    0.0295  & -0.0337 &   0.1002 &   0.0068 &   0.0427 \\
    0.0279  & -0.0156 &   0.0068 &   0.0281 &   0.0262 \\
    0.0437  & -0.0159 &   0.0427 &   0.0262 &   0.0884 \end{array}\right]
		\]
		\[
		C = \left[\begin{array}{ccccc} 1.0000  & -0.5367  &  0.2764  &  0.4936  &  0.4359 \\
   -0.5367  &  1.0000  & -0.6667 &  -0.5828  & -0.3349\\
    0.2764  & -0.6667  &  1.0000 &   0.1282  &  0.4537\\
    0.4936  & -0.5828  &  0.1282 &   1.0000  &  0.5257\\
    0.4359  & -0.3349  &  0.4537 &   0.5257  &  1.0000 \end{array}\right]
	\]
The eigenvalues of $V$ are $0.0031; 0.0215; 0.0515; 0.0802; 0.1996$. Given the portfolio 
\[
\mathbf{x} = [0.0932; 0.0495; 0.0215; 0.5212; 0.3147]^t
\] 
the corresponding marginal risk contributions are given by
\[
\textit{RC}(\mathbf{x}) = \textrm{diag}(\mathbf{x}) V \mathbf{x}^t = [0.0036; -0.0008; 0.0004; 0.0130; 0.0144]^t.
\] 
Due to the negative correlations between asset $i = 2$ and the other assets, asset $i = 2$ exhibits a negative marginal  risk contribution. If the exposure to asset $i = 2$ is increased of $1\%$ and the exposure to asset $i = 4$ is reduced by the same amount, the updated portfolio and corresponding marginal risk contributions become:
\[
\mathbf{x} = [0.0932; 0.0595; 0.0215; 0.5112; 0.3147]^t
\]  
\[
\textit{RC}(\mathbf{x}) = [0.0036; -0.0009; 0.0004; 0.0125; 0.0142]^t.
\]
This example illustrates a rare behavior in which increasing the exposure to one asset leads to a decrease in its marginal risk contribution to total risk.
Now consider the portfolio
\[
\mathbf{x} = [0.1450; 0.4049; 0.0298; 0.2896; 0.1307]^t
\] 
with marginal risk contributions 
\[
\textit{RC}(\mathbf{x}) = [0.0028; -0.0006; 0.0000; 0.0027; 0.0027]^t.
\]
In this case, a positive exposure to asset $i = 3$ corresponds to a zero marginal risk contribution of that asset to the total risk.   
\end{example}

Given an arbitrary initial point $\mathbf{x}_1 \in \mathring{S}$ and $0<L<1$, the sequences $\{\mathbf{x}_n\}_{n=1,2,\dots}$ and $\{\Delta \mathbf{x}_n\}_{n=1,2,\dots}$ are 
\begin{eqnarray} \label{succ1}
\left \{ \begin{array}{lcl}
\Delta (\mathbf{x}_n) &=& \textrm{diag}(\mathbf{x}_n) V \mathbf{x}_n - (\mathbf{x}_n^t V \mathbf{x}_n) \mathbf{x}^b \\ \\ 
\mathbf{x}_{n+1} &=& \mathbf{x}_n + k_n \Delta (\mathbf{x}_n) 
\end{array} \right . \ ,
\end{eqnarray}
where $k_n$ is computed as follows. At each step $n$, the vectors $D_1$, $D_2$ and $D_3$ are calculated 
\begin{eqnarray*}
D_1&=& \textrm{diag}(\mathbf{x}_n) V \Delta (\mathbf{x}_n) -(\mathbf{x}_n^t V \Delta (\mathbf{x}_n)) \mathbf{x}^b  \ ,\\
D_2&=& \textrm{diag}(\Delta (\mathbf{x}_n) ) V \mathbf{x}_n - ((\Delta (\mathbf{x}_n))^t V  \mathbf{x}_n )  \mathbf{x}^b  \ , \\
D_3&=& \Delta (\Delta (\mathbf{x}_n) )   \ .
\end{eqnarray*}
We propose two alternative criteria for selecting the value of $k_n$.  
\begin{itemize}
\item If the quartic polynomial $Q(k_n)$ has at least one real root, then $k_n$ is chosen such that $Q(k_n) < 0$. 
\begin{eqnarray*}
Q(k_n)&=&k^4 \|D_3\|^2 + 2 k^3 (D_1+D_2)^tD_3  + k^2 (2\Delta (\mathbf{x}_n^t) D_3 + \|D_1+D_2\|^2) \\
&+& 2 k \Delta (\mathbf{x}_n ^t) (D_1+D_2)  + (1-L^2)\|\Delta (\mathbf{x}_n) \|^2 \  . 
\end{eqnarray*}
\item Alternatively, when $Q(k)$ has no real roots, we consider the cubic polynomial $C(k)$,
\begin{eqnarray*}
C(k)&=&k^3 \|D_3\|^2 + 2 k^2 (D_1+D_2)^tD_3  + k (2\Delta (\mathbf{x}_n^t) D_3 + \|D_1+D_2\|^2) \\
&+& 2  \Delta (\mathbf{x}_n^t) (D_1+D_2) \  . 
\end{eqnarray*}
\noindent that always admits at least one real root. Then $k_n$ is chosen to verify $C(k_n) < 0$.
\end{itemize}
As shown in the following theorem, the condition $Q(k_n)<0$ is equivalent to $\|\Delta(\mathbf{x}_{n+1})\|_2^2 \le L^2 \|\Delta(\mathbf{x}_n)||_2^2$ while the condition $C(k_n) < 0$ is equivalent to $\|\Delta(\mathbf{x}_{n+1})\|_2^2 \le \|\Delta(\mathbf{x}_n)\|_2^2$.
The following theorem holds. 
\begin{theorem}\label{lim_delta}
If $k_n$ is calculated as described above and $C(k_n) = 0$ is solved a finite number of times, then 
$$\lim_{n \rightarrow \infty} \|\Delta (\mathbf{x}_n)\|_2=0 \,$$
that is $\lim_{n \rightarrow \infty} \Delta (\mathbf{x}_n)=\mathbf{0}$.
\end{theorem}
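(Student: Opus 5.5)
Since $\Delta$ is a quadratic map in $\mathbf{x}$, the natural plan is to expand $\Delta(\mathbf{x}_n+k\Delta(\mathbf{x}_n))$ exactly as a polynomial in $k$. Writing $\boldsymbol{\delta}=\Delta(\mathbf{x}_n)$ and using bilinearity of $(\mathbf{u},\mathbf{v})\mapsto \textrm{diag}(\mathbf{u})V\mathbf{v}-(\mathbf{u}^tV\mathbf{v})\mathbf{x}^b$ (together with symmetry of $V$, so that $\mathbf{x}_n^tV\boldsymbol\delta=\boldsymbol\delta^tV\mathbf{x}_n$), one gets
\[
\Delta(\mathbf{x}_n+k\boldsymbol\delta)=\boldsymbol\delta+k(D_1+D_2)+k^2D_3 .
\]
Taking the squared Euclidean norm gives a quartic in $k$:
\[
\|\Delta(\mathbf{x}_{n+1})\|_2^2=\|\boldsymbol\delta\|^2+2k\,\boldsymbol\delta^t(D_1+D_2)+k^2\bigl(\|D_1+D_2\|^2+2\boldsymbol\delta^tD_3\bigr)+2k^3(D_1+D_2)^tD_3+k^4\|D_3\|^2 .
\]
Subtracting $L^2\|\boldsymbol\delta\|^2$ gives exactly $Q(k)$. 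Hence $Q(k_n)<0$ is equivalent to $\|\Delta(\mathbf{x}_{n+1})\|_2<L\|\Delta(\mathbf{x}_n)\|_2$.

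Subtracting instead $\|\boldsymbol\delta\|^2$ gives $k\,C(k)$. So the second criterion should be read as choosing $k_n$ with $k_nC(k_n)<0$, or with $C(k_n)<0$ for $k_n>0$. The sign convention needs to be fixed explicitly, and I would state it as a clarification. This yields $\|\Delta(\mathbf{x}_{n+1})\|_2\le\|\Delta(\mathbf{x}_n)\|_2$. Such a $k_n$ exists because $C$ is a cubic with positive leading coefficient whenever $D_3\neq\mathbf 0$, so it takes negative values for $k\to-\infty$ and therefore on some interval. The degenerate case $D_3=\mathbf 0$ has to be handled separately: $C$ is then at most quadratic and again changes sign unless it is constant, and in that case $k=0$ is trivial.

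With these two equivalences the convergence argument is a counting argument. Let $m$ be the (finite, by hypothesis) number of steps at which the cubic criterion is used. At every other step $\|\Delta(\mathbf{x}_{n+1})\|_2\le L\|\Delta(\mathbf{x}_n)\|_2$, and at the $m$ exceptional steps the norm does not increase. By induction,
\[
\|\Delta(\mathbf{x}_n)\|_2\le L^{\,n-1-m}\|\Delta(\mathbf{x}_1)\|_2
\]
for $n>m$, which tends to $0$ since $0<L<1$. Equivalently, after the last cubic step $n_0$ the hypothesis of Theorem~\ref{main} holds for the tail sequence $\{\mathbf{x}_n\}_{n\ge n_0}$, so its first conclusion applies directly. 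Norm convergence to zero is equivalent to $\Delta(\mathbf{x}_n)\to\mathbf 0$.

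The main obstacle is not the limit but the algebra and its interpretation. One must verify carefully that the cross terms combine into $D_1+D_2$; this relies on the symmetry of $V$ and on $\textrm{diag}(\mathbf u)V\mathbf v$ being bilinear. One must also make precise the sign condition in the cubic case, since ``$C(k_n)<0$'' literally only gives a decrease when $k_n>0$. Finally, if the step is rescaled to stay in $S$ as in Algorithm~\ref{algo_gen}, I would note that the rescaled $\widetilde k_n$ is assumed to satisfy the same inequality, so the counting argument is unaffected.
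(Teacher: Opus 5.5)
Your route is the paper's own. It uses the same expansion $\Delta(\mathbf{x}_n+k\Delta(\mathbf{x}_n))=\Delta(\mathbf{x}_n)+k(D_1+D_2)+k^2D_3$. It identifies $Q(k)$ and $kC(k)$ as $\|\Delta(\mathbf{x}_{n+1})\|_2^2$ minus $L^2\|\Delta(\mathbf{x}_n)\|_2^2$ and minus $\|\Delta(\mathbf{x}_n)\|_2^2$ respectively; the paper's footnote raises the same point about the sign of $k_nC(k_n)$. It then counts over the finitely many cubic steps, and your bound $L^{n-1-m}\|\Delta(\mathbf{x}_1)\|_2$ is a cleaner form of the paper's $L^m$ estimate. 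Symmetry of $V$ is not actually needed, because $D_1$ and $D_2$ already carry $\mathbf{x}_n^tV\Delta(\mathbf{x}_n)$ and $\Delta(\mathbf{x}_n)^tV\mathbf{x}_n$ separately.

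One sub-argument is wrong: your justification that a cubic-step $k_n$ exists. Knowing $C(k)<0$ as $k\to-\infty$ gives $kC(k)>0$ there, so the residual \emph{increases}. Indeed, for $D_3\neq\mathbf 0$, $kC(k)$ is a quartic with positive leading coefficient $\|D_3\|_2^2$, so every large $|k|$ increases $\|\Delta\|_2$. Moreover, a strict decrease $k_nC(k_n)<0$ need not be attainable at all when $C(0)=2\Delta(\mathbf{x}_n)^t(D_1+D_2)=0$, since then $kC(k)=k^2(\cdots)$ can be nonnegative everywhere.

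The correct claim is the non-strict one, and you get it as the paper does: take $k_n$ to be a real root of $C$, which gives $\|\Delta(\mathbf{x}_{n+1})\|_2=\|\Delta(\mathbf{x}_n)\|_2$. Such a root exists by odd degree if $D_3\neq\mathbf 0$. If $D_3=\mathbf 0$, the $k^2$ coefficient $2(D_1+D_2)^tD_3$ also vanishes, so $C$ is affine, and it vanishes identically when $D_1+D_2=\mathbf 0$. This, not ``at most quadratic,'' is why $C$ has a real root in that case. A strict decrease is available only when $C(0)\neq 0$, by taking $|k|$ small with sign opposite to $C(0)$.

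Your counting argument uses only non-increase at cubic steps, and the theorem presupposes that $k_n$ is computed as described. So once this existence remark is corrected, the limit argument goes through unchanged.
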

\begin{proof}
It holds that 
\begin{eqnarray*}
\Delta (\mathbf{x}_{n+1})&=& \textrm{diag}(\mathbf{x}_{n+1}) V \mathbf{x}_{n+1} - \mathbf{x}^b (\mathbf{x}_{n+1}^t V \mathbf{x}_{n+1}) \\
&=& \textrm{diag}(\mathbf{x}_n +k_n \Delta (\mathbf{x}_n)) V (\mathbf{x}_n + k_n \Delta (\mathbf{x}_n)) \\
&&- \mathbf{x}^b \left ((\mathbf{x}_n +k_n \Delta (\mathbf{x}_n))^t V (\mathbf{x}_n + k_n \Delta (\mathbf{x}_n))\right ) \\
&=&  \textrm{diag}(\mathbf{x}_n ) V \mathbf{x}_n + k_n \textrm{diag}(\mathbf{x}_n ) V \Delta (\mathbf{x}_n) \\
&&+ k_n \textrm{diag}( \Delta (\mathbf{x}_n)) V \mathbf{x}_n  +
k_n^2 \textrm{diag}(\Delta (\mathbf{x}_n)) V (\Delta (\mathbf{x}_n)) \\
&&- \mathbf{x}^b \left (\mathbf{x}_n ^t V \mathbf{x}_n \right )  - 2 k_n \mathbf{x}^b \mathbf{x}_n^t V \Delta (\mathbf{x}_n) -\mathbf{x}^b k_n^2 \left (\Delta (\mathbf{x}_n^t)  V \Delta (\mathbf{x}_n)\right ) \\
&=& \Delta \mathbf{x}_n + k_n (D_1 +D_2) + k_n^2 D_3 
\end{eqnarray*}
Imposing $\|\Delta (\mathbf{x}_{n+1}) \|^2\le L^2  \|\Delta (\mathbf{x}_n)\|^2$,
\begin{eqnarray*}
&&\left ( \Delta (\mathbf{x}_n) + k_n (D_1 +D_2) + k_n^2 D_3\right)^t\left ( \Delta (\mathbf{x}_n) + k_n (D_1 +D_2) + k_n^2 D_3\right ) \\
&&\le L^2  \|\Delta (\mathbf{x}_n)\|^2  
\end{eqnarray*}
then
\begin{eqnarray*}
&&(1-L^2)\|\Delta (\mathbf{x}_n) \|^2 + 2 k_n \Delta (\mathbf{x}_n)^t (D_1+D_2)  + k_n^2 (2\Delta (\mathbf{x}_n)^t D_3 + \|D_1+D_2\|^2) \\
&&+ 2 k_n^3 (D_1+D_2)^tD_3 +k_n^4 \|D_3\|^2 \le 0. 
\end{eqnarray*}
If $Q(k)$ has at least one real root, let $\alpha$ be the root with the smallest absolute value. Since $Q(0) >0$, if $\alpha >0$, then any value of $k_n> \alpha$, smaller than an eventual root larger than $\alpha$, verifies the inequality. Conversely, if $\alpha <0$, a value of $k_n <\alpha$, larger than an eventual root smaller than $\alpha$, verifies the inequality. The value of $k_n$ permits to calculate $x_{n+1}$ such that $\|\Delta (\mathbf{x}_{n+1}) \|_2\le L  \|\Delta (\mathbf{x}_n)\|_2$. 
\smallskip
If $Q(k)$ has non real roots, then we require $\|\Delta (\mathbf{x}_{n+1}) \|^2\le \|\Delta (\mathbf{x}_n)\|^2$, which is equivalent to the following inequality
 \begin{eqnarray*}
&&k_n( 2  \Delta (\mathbf{x}_n )^t (D_1+D_2)  + k_n(2\Delta (\mathbf{x}_n)^t D_3 + \|D_1+D_2\|^2) \\
&&+ 2 k_n^2 (D_1+D_2)^tD_3 +k_n^3 \|D_3\|^2 ) \le 0. 
\end{eqnarray*}
A cubic polynomial always has at least one real root. Therefore, it is possible to calculate $k_n$ such that $\|\Delta (\mathbf{x}_{n+1}) \|^2\le \|\Delta (\mathbf{x}_n)\|^2$.\footnote{When solving the cubic inequality, it is necessary to analyze the sign of $k_n C(k_n)$. Four cases arise depending on the sign of $k_n$ and the behavior of $C(k_n)$.} If the real root of the cubic polynomial is calculated a finite number of times, it holds that
$$ \|\Delta (\mathbf{x}_{n+1}) \|_2\le L^m\|\Delta (\mathbf{x}_n)\|_2,$$
where $m \le n$, with $\lim_{n\rightarrow \infty} m=\infty$. Since $L<1$, the first result holds. Obviously, since the norm is a continuous function, $\lim_{n\rightarrow \infty} \Delta(\mathbf{x}_n) = \mathbf{0}$.
\end{proof}

Now we prove that, $\{\mathbf{x}_n\}$ is a Cauchy sequence when $k_n$ is calculated as described. This result is a special case of  theorem \ref{main}, see subsection \ref{fix_gen}. Compared to the general version, the following theorem differs in two main aspects: first, it exploits the specific functional form of the marginal risk contributions; second, it explicitly incorporates the calculation of $k_n$ within the proof, rather than assuming the existence of $k_n$.
\begin{theorem}
If $k_n$ is calculated as described, $C(k_n) = 0$ is solved a finite number of times and $|k_n| < k$, $\forall n$, (i.e. the sequence $\{k_n\}_{n=1,2,\dots} $ is bounded), then $\{\mathbf{x}_n\}$ is a Cauchy sequence.
\end{theorem}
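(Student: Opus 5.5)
The plan is to reduce the statement to the telescoping argument already used in Theorem~\ref{main}. The difference is that the contraction $\|\Delta(\mathbf{x}_{n+1})\|_2\le L\|\Delta(\mathbf{x}_n)\|_2$ now holds only at the steps where $k_n$ comes from the quartic $Q$. At the finitely many steps where $k_n$ comes from the cubic $C$, we only have monotonicity $\|\Delta(\mathbf{x}_{n+1})\|_2\le\|\Delta(\mathbf{x}_n)\|_2$.

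First I will record the two one-step inequalities established in the proof of Theorem~\ref{lim_delta}. Using the expansion $\Delta(\mathbf{x}_{n+1})=\Delta(\mathbf{x}_n)+k_n(D_1+D_2)+k_n^2D_3$, the condition $Q(k_n)<0$ gives the contraction with factor $L$, and the condition $k_nC(k_n)\le 0$ gives the non-expansive bound. Since $C(k_n)=0$ is solved only finitely many times, there is an index $n_0$ after which every step is of the $Q$-type. Hence for all $n\ge n_0$ and all $i\ge 0$,
\begin{equation*}
\|\Delta(\mathbf{x}_{n+i})\|_2\le L^{i}\|\Delta(\mathbf{x}_n)\|_2 ,
\end{equation*}
and in particular $\|\Delta(\mathbf{x}_n)\|_2\to 0$, which is exactly Theorem~\ref{lim_delta}. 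The steps before $n_0$ are irrelevant to the Cauchy property, since it only concerns tails.

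Next I will fix $p>\max(n_0,\overline n)$ and $q=p+j$, and telescope exactly as in Theorem~\ref{main}:
\begin{equation*}
\|\mathbf{x}_p-\mathbf{x}_q\|_2\le\sum_{i=0}^{j-1}|k_{p+i}|\,\|\Delta(\mathbf{x}_{p+i})\|_2\le k\,\|\Delta(\mathbf{x}_p)\|_2\sum_{i=0}^{j-1}L^i<\frac{k}{1-L}\,\|\Delta(\mathbf{x}_p)\|_2 .
\end{equation*}
Here I use the boundedness $|k_n|<k$ from the hypothesis. Given $\epsilon>0$, I will use $\|\Delta(\mathbf{x}_n)\|_2\to 0$ to choose $\overline n$ so that $\|\Delta(\mathbf{x}_p)\|_2<\epsilon(1-L)/k$ for $p>\overline n$. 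This yields $\|\mathbf{x}_p-\mathbf{x}_q\|_2<\epsilon$.

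The only real obstacle is bookkeeping: the geometric decay must be anchored at $p$ and not at the initial point, so that the finitely many non-contracting cubic steps do not break the chain. Taking $p$ beyond the last cubic step resolves this. Alternatively, one can keep a bound of the form $L^{m}$ with $m$ counting the $Q$-steps, as in Theorem~\ref{lim_delta}, and observe that the number of non-$Q$ steps in $[p,p+j)$ is at most a fixed finite number $M$. This replaces $1/(1-L)$ by $L^{-M}/(1-L)$, which is still a finite constant.
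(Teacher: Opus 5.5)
Your proof is correct and follows the paper's route. You telescope $\|\mathbf{x}_p-\mathbf{x}_q\|_2$, use $|k_n|<k$, bound $\|\Delta(\mathbf{x}_{p+i})\|_2$ geometrically, and pick $\overline n$ from $\Delta(\mathbf{x}_n)\to\mathbf 0$; your handling of the finitely many cubic steps (anchoring $p$ after the last one, or using the constant $L^{-M}/(1-L)$) is more careful than the paper's, which writes $\sum_{i\ge 0}L^{m_i}=1/(1-L)$ even though $m_i\le i$ forces that sum to be at least $1/(1-L)$.
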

\begin{proof}
Without loss of generality, we assume $q = p + j$, with $j \in \mathbb{N}$. Then
\begin{eqnarray*}
&&\|\mathbf{x}_{p+j} - \mathbf{x}_p \|_2 =\|\mathbf{x}_{p+j}- \mathbf{x}_{p+1}+\mathbf{x}_{p+1} - \mathbf{x}_p \|_2 \\
&&\le \|\mathbf{x}_{p+j}- \mathbf{x}_{p+1}\|_2+ \|\mathbf{x}_{p+1} - \mathbf{x}_p \|_2 \\
&&\le \|\mathbf{x}_{p+j}- \mathbf{x}_{p+2}\|_2+ \|\mathbf{x}_{p+2} - \mathbf{x}_{p+1} \|_2+ \|\mathbf{x}_{p+1} - \mathbf{x}_p \|_2 \\
&&\le  \dots \le  \sum_{i = 0}^{j-1} \|\mathbf{x}_{p+i+1}-\mathbf{x}_{p+i}\|_2 = \sum_{i = 0}^{j-1} \|k_{p+i}\Delta \mathbf{x}_{p+i}\|_2 \\
&& \le k \sum_{i = 0}^{j-1} \|\Delta (\mathbf{x}_{p+i})\|_2 \ . 
\end{eqnarray*}
By assumption, it exists $L < 1 $ such that $\|\Delta (\mathbf{x}_{n+1})\|_2 \le L \|\Delta (\mathbf{x}_n)\|_2$ is satisfied an infinite number of times since $\|\Delta (\mathbf{x}_{n+1})\|_2 = \|\Delta (\mathbf{x}_n)\|_2$ can happen in a finite number of steps, it holds that
 $$\|\Delta (\mathbf{x}_{n+h})\|_2 \le L^{m_h} \|\Delta (\mathbf{x}_n)\|_2 \quad \text{with} \quad m_h \leq h$$
and we obtain
\begin{eqnarray*}
&&\le k \sum_{i = 0}^{j-1} \|\Delta \mathbf{x}_{p+i}\|_2\le k \sum_{i = 0}^{j-1} L^{m_i} \|\Delta \mathbf{x}_p\|_2 \\
&&< k\|\Delta \mathbf{x}_p\|_2 \sum_{i = 0}^{\infty }  L^{m_i}= k\|\Delta \mathbf{x}_p\|_2 \frac 1 {1-L} \ ,
\end{eqnarray*}
because $ \lim_{n \rightarrow \infty}\sum_{i = 0}^n L^i = \frac{1}{1-L}$.
The limit of the sequence $\{\Delta (\mathbf{x}_n)\}$ is $0$, given $\epsilon$ and $\gamma=\epsilon\frac{1-L}k$, it exists $\overline n$ such that, if $p> \overline n$ then $\|\Delta (\mathbf{x}_p)\|_2< \gamma$. 
Concluding, for each $\epsilon$ it exists $\overline n$ such that for $p> \overline n$ and, consequently, $q>p>\overline n$,  the following holds 
\begin{eqnarray*}
\|\mathbf{x}_{p} - \mathbf{x}_q \|_2 \le k\|\Delta (\mathbf{x}_p)\|_2 \frac 1 {1-L} < k \gamma  \frac 1 {1-L} =\epsilon \ .
\end{eqnarray*}
\end{proof}
Each Cauchy sequence has a finite limit within a complete metric space. Therefore, $\{ x_n\}$ has a finite limit in $\mathbb R^N$.  

\begin{theorem}\label{lallo}
Let $\widetilde{\mathbf{x}} \in \mathbb R^N$ such that $\lim_{n\rightarrow \infty} \mathbf{x}_n=\widetilde{\mathbf{x}}$, then the following holds
 $$\textrm{diag} (\widetilde{\mathbf{x}})V\widetilde{\mathbf{x}}=\mathbf{x}_b (\widetilde{\mathbf{x}}^{t}V\widetilde{\mathbf{x}}) \ . $$
\end{theorem}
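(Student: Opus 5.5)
The plan is to reproduce the argument of Theorem~\ref{fix}, specialised to the standard deviation, where $\Delta(\mathbf{x}) = \textrm{diag}(\mathbf{x})V\mathbf{x} - (\mathbf{x}^tV\mathbf{x})\mathbf{x}^b$. The statement is understood under the standing hypotheses of the two preceding theorems: $k_n$ is computed via $Q$ or $C$, the cubic $C$ is used only finitely many times, and $\{k_n\}$ is bounded. These hypotheses are what give the sequence $\{\mathbf{x}_n\}$ and its limit $\widetilde{\mathbf{x}}$.

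The first step is to invoke Theorem~\ref{lim_delta}, which gives $\lim_{n\to\infty}\Delta(\mathbf{x}_n)=\mathbf{0}$ under exactly these hypotheses. Its proof yields $\|\Delta(\mathbf{x}_{n})\|_2 \le L^{m}\|\Delta(\mathbf{x}_1)\|_2$ with $m\to\infty$, because only finitely many steps fail to contract by the factor $L$.

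The second step is continuity. Here $\Delta$ is a polynomial map in the entries of $\mathbf{x}$, of degree at most two. Indeed, the $i$-th component is $x_i\sum_j V_{ij}x_j - x^b_i\sum_{j,l}x_jV_{jl}x_l$. So $\Delta$ is continuous on all of $\mathbb{R}^N$. This is the only genuinely new ingredient compared with the general setting, where continuity of $\textsl{RC}^\rho$ had to be assumed. Since $\mathbf{x}_n\to\widetilde{\mathbf{x}}$, continuity gives $\Delta(\mathbf{x}_n)\to\Delta(\widetilde{\mathbf{x}})$.

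By uniqueness of limits in $\mathbb{R}^N$, $\Delta(\widetilde{\mathbf{x}})=\mathbf{0}$. This is precisely $\textrm{diag}(\widetilde{\mathbf{x}})V\widetilde{\mathbf{x}} = \mathbf{x}^b(\widetilde{\mathbf{x}}^tV\widetilde{\mathbf{x}})$, which is the claim.

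There is no real obstacle here. The only delicate point is making sure that the hypotheses of Theorem~\ref{lim_delta} are in force, so that $\Delta(\mathbf{x}_n)\to\mathbf{0}$ actually holds and is not merely assumed. One could also remark that $\mathbf{1}^t\mathbf{x}_n=1$ for all $n$, since $\mathbf{1}^t\Delta(\mathbf{x})=\mathbf{x}^tV\mathbf{x}-\mathbf{x}^tV\mathbf{x}=0$. Hence $\mathbf{1}^t\widetilde{\mathbf{x}}=1$, which rules out the trivial solution $\widetilde{\mathbf{x}}=\mathbf{0}$, although this is not needed for the identity itself.
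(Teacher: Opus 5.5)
Your proposal is correct and matches the paper's proof: invoke Theorem~\ref{lim_delta} to get $\Delta(\mathbf{x}_n)\to\mathbf{0}$, then use continuity of $\Delta$ and uniqueness of limits to conclude $\Delta(\widetilde{\mathbf{x}})=\mathbf{0}$. The paper leaves two points implicit that you make explicit: that the hypotheses of Theorem~\ref{lim_delta} must be in force, and why $\Delta$ is continuous (it is a polynomial map of degree two).
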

\begin{proof}
Thanks to theorem~\ref{lim_delta},  $\lim_{n\rightarrow \infty} \Delta \mathbf{x}_n =\mathbf{0}$. \\
The function $\Delta (\mathbf{x})$ is continuous and $\lim_{n\rightarrow \infty} \mathbf{x}_n=\widetilde{\mathbf{x}}$, then
 $$ \lim_{n\rightarrow \infty} \Delta (\mathbf{x}_n)= \Delta (\widetilde{\mathbf{x}}) =  \textrm{diag} (\widetilde{\mathbf{x}})V\widetilde{\mathbf{x}}-\mathbf{x}_b (\widetilde{\mathbf{x}}V\widetilde{\mathbf{x}})  \ .$$
Concluding,  
$$ \textrm{diag} (\widetilde{\mathbf{x}})V\widetilde{\mathbf{x}}-\mathbf{x}_b (\widetilde{\mathbf{x}}^{t}V\widetilde{\mathbf{x}})  = \textbf{0}$$
from which we obtain the thesis of the theorem.
\end{proof}
From theorem \ref{lallo}, it follows that $\mathbf{x}^*$ is a fixed point of the function $G: \mathbb R^N \rightarrow \mathbb R^N$, defined as $G(\mathbf{x})=\mathbf{x}- k(\mathbf{x}) \Delta (\mathbf{x})$. This result is immediate, since $\Delta (\mathbf{x}^*) = \mathbf{0}$, $k(\mathbf{x})$ is bounded and thus $G(\mathbf{x}^*)=\mathbf{x}^*$. However, as discussed more generally at the end of subsection \ref{fix_gen}, the fixed point may not be unique. 

\subsection{Existence and uniqueness of the fixed point in $S$}
In subsection \ref{exuniq}, we presented general results concerning the existence and uniqueness of the fixed point. When the standard deviation is adopted as the referring risk measure, an interesting alternative to Theorem~\ref{kellogg} becomes available for proving the existence and uniqueness of the fixed point in $S$. Specifically, the functional form of the marginal risk contributions allows the application of general mathematical results developed in the context of rescaling the row and column norms of matrices to improve their condition number (see, among others, \cite{o2003scaling}). Reformulating theorem~1 and Corollary~1 from \cite{o2003scaling} within our framework the following result holds.

\begin{theorem}[\textbf{Oleary}] \label{th:Oleary}
Given a positive definite square real matrix $V$ of order $N$ and a vector $x_b \in \mathbb{R}^N$, it exists a vector
$\mathbf{y} \in \mathbb R^N$ such that 
\begin{equation}\label{eq:OLeary}
\textrm{diag}(\mathbf{y}) V \mathbf{y} = \mathbf{x}_b \ .
\end{equation}
Moreover, equation~\eqref{eq:OLeary} admits $2^N$ solutions, each located in a distinct orthant of $\mathbb{R}^N$.
\end{theorem}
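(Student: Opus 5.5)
The plan is to obtain the solutions of \eqref{eq:OLeary} as critical points of a strictly convex barrier-type function, one orthant at a time. Throughout I assume, as in our setting, that $\mathbf{x}_b \in \mathring{S}$, so that every entry $b_i := (\mathbf{x}_b)_i$ is strictly positive. This assumption is needed: if some $b_i = 0$, the solution count changes. Fix a sign vector $\mathbf{s} \in \{-1,1\}^N$ and the corresponding open orthant $O_{\mathbf{s}} = \{\mathbf{y} : s_i y_i > 0,\ \forall i\}$. On $O_{\mathbf{s}}$ consider
$$ f_{\mathbf{s}}(\mathbf{y}) = \tfrac12\, \mathbf{y}^t V \mathbf{y} - \sum_{i=1}^N b_i \log (s_i y_i) . $$
Its gradient is $V\mathbf{y} - (b_1/y_1,\dots,b_N/y_N)^t$. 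Hence $\nabla f_{\mathbf{s}}(\mathbf{y}) = \mathbf{0}$ holds if and only if $y_i (V\mathbf{y})_i = b_i$ for all $i$, which is exactly $\textrm{diag}(\mathbf{y}) V \mathbf{y} = \mathbf{x}_b$.

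Existence in each orthant comes from showing that $f_{\mathbf{s}}$ attains a minimum on $O_{\mathbf{s}}$, via coercivity on the closed-set exhaustion of $O_{\mathbf{s}}$. Let $\lambda_{\min} > 0$ be the smallest eigenvalue of $V$.
\begin{itemize}
\item As $\|\mathbf{y}\|_2 \to \infty$ inside $O_{\mathbf{s}}$, we have $f_{\mathbf{s}}(\mathbf{y}) \ge \tfrac12 \lambda_{\min}\|\mathbf{y}\|_2^2 - \sum_i b_i \log \|\mathbf{y}\|_2 \to +\infty$.
\item As some $y_i \to 0$ with $\mathbf{y}$ bounded, the term $-b_i \log(s_i y_i) \to +\infty$, since $b_i>0$. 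Meanwhile the quadratic term is nonnegative and the remaining log terms are bounded below on bounded sets.
\end{itemize}
So the sublevel set $\{f_{\mathbf{s}} \le f_{\mathbf{s}}(\mathbf{s})\}$ is a compact subset of $O_{\mathbf{s}}$. The continuous function $f_{\mathbf{s}}$ therefore attains a minimum at an interior point, where the gradient vanishes. This produces a solution of \eqref{eq:OLeary} in $O_{\mathbf{s}}$. I expect the bookkeeping in this coercivity step, handling both degenerations simultaneously, to be the only delicate part, and it is routine.

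Uniqueness within the orthant comes from convexity. On the convex set $O_{\mathbf{s}}$, the Hessian of $f_{\mathbf{s}}$ is $V + \textrm{diag}(b_1/y_1^2,\dots,b_N/y_N^2)$, which is positive definite. Thus $f_{\mathbf{s}}$ is strictly convex and has at most one critical point, so each open orthant contains exactly one solution.

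Finally, any solution $\mathbf{y}$ of \eqref{eq:OLeary} satisfies $y_i (V\mathbf{y})_i = b_i > 0$, so no coordinate vanishes and $\mathbf{y}$ lies in some open orthant. Combining these facts, the solutions are in bijection with the $2^N$ sign vectors: exactly one lies in each orthant. As a consistency check, if $\mathbf{y}$ solves \eqref{eq:OLeary} then so does $-\mathbf{y}$, which lies in the opposite orthant. The solution in the positive orthant, normalized to the simplex, is the one relevant for the risk budgeting portfolio.
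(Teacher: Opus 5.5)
Your proof is correct. The paper gives no argument of its own for this statement: it restates Theorem~1 and Corollary~1 of O'Leary. So your proposal is a self-contained proof where the paper only has a citation.

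Your route is the classical convex-programming one. On each open orthant $O_{\mathbf{s}}$ the barrier $f_{\mathbf{s}}$ has the following properties:
\begin{itemize}
\item its critical points are exactly the solutions of $\textrm{diag}(\mathbf{y})V\mathbf{y}=\mathbf{x}_b$;
\item it is strictly convex, since its Hessian $V+\textrm{diag}(b_i/y_i^2)$ is positive definite;
\item it is coercive relative to $O_{\mathbf{s}}$. Indeed $|y_i|\le\|\mathbf{y}\|_2$ gives $f_{\mathbf{s}}\ge\tfrac12\lambda_{\min}r^2-(\sum_i b_i)\log r$ with $r=\|\mathbf{y}\|_2$, which bounds the sublevel set. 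On that bounded set each $|y_i|$ is bounded away from $0$.
\end{itemize}
Hence there is exactly one solution per orthant. Positivity of the $b_i$ rules out solutions on the coordinate hyperplanes, which gives exactly $2^N$ solutions.

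What your version buys is that it exposes the hypotheses the paper's restatement drops.
\begin{itemize}
\item You need $b_i>0$. Only positivity matters; $\mathbf{1}^t\mathbf{x}_b=1$ plays no role.
\item You use the symmetry of $V$, in $\nabla(\tfrac12\mathbf{y}^tV\mathbf{y})=V\mathbf{y}$ and in $\lambda_{\min}>0$. This is harmless here since $V$ is a covariance matrix.
\end{itemize}
With an arbitrary $\mathbf{x}_b\in\mathbb R^N$, as the statement is written, the claim is false. For $N=1$ and $V=(1)$, the choice $\mathbf{x}_b=(-1)$ gives no real solution, and $\mathbf{x}_b=(0)$ gives one solution rather than two. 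You were right to flag and impose this restriction.

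Finally, your argument is precisely the auxiliary log-barrier optimization (the structure behind the paper's OP2, following Maillard et al.) that the paper presents its fixed-point method as an alternative to. Through the citation, existence and uniqueness in the standard-deviation case rest on the same convexity the introduction criticizes. Your write-up makes that dependence explicit, while the citation keeps the exposition short.
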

The following corollary is an immediate consequence of the previous result.
\begin{corollary}
Given a positive definite square real matrix $V$ of order $N$ and a vector $\mathbf{x}_b \in S$, it exists a unique fixed point $ \mathbf{\widehat{x}}$ of $G(\mathbf{x})=\mathbf{x}-(\textrm{diag}(\mathbf{x})V\mathbf{x}-\mathbf{x}_b(\mathbf{x}^tV\mathbf{x}) $ in $S$. 
\end{corollary}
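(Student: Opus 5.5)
Fixed points of $G$ in $S$ are exactly the points $\mathbf{x}\in S$ with $\Delta(\mathbf{x})=\mathbf{0}$, i.e.
$$\textrm{diag}(\mathbf{x})V\mathbf{x}=(\mathbf{x}^tV\mathbf{x})\,\mathbf{x}_b .$$
The plan is to map this equation to the O'Leary equation~\eqref{eq:OLeary} by a rescaling, and then use Theorem~\ref{th:Oleary} to obtain existence and uniqueness in the nonnegative orthant. Since $V$ is positive definite and $\mathbf{x}\in S$ is nonzero, $\sigma^2:=\mathbf{x}^tV\mathbf{x}>0$. Setting $\mathbf{y}=\mathbf{x}/\sigma$ gives $\textrm{diag}(\mathbf{y})V\mathbf{y}=\textrm{diag}(\mathbf{x})V\mathbf{x}/\sigma^2$. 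Hence $\mathbf{x}$ is a fixed point if and only if $\mathbf{y}=\mathbf{x}/\sqrt{\mathbf{x}^tV\mathbf{x}}$ solves $\textrm{diag}(\mathbf{y})V\mathbf{y}=\mathbf{x}_b$.

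\emph{Existence.} Take the solution $\mathbf{y}$ of~\eqref{eq:OLeary} lying in the nonnegative orthant, which Theorem~\ref{th:Oleary} provides. Note that $\mathbf{y}\neq\mathbf{0}$ because $\mathbf{1}^t\mathbf{x}_b=1$, and that $\mathbf{1}^t\mathbf{y}>0$. Define $\widehat{\mathbf{x}}=\mathbf{y}/(\mathbf{1}^t\mathbf{y})\in S$. Then, by homogeneity,
$$\textrm{diag}(\widehat{\mathbf{x}})V\widehat{\mathbf{x}}=\mathbf{x}_b/(\mathbf{1}^t\mathbf{y})^2 .$$
Summing the entries and using $\mathbf{1}^t\textrm{diag}(\mathbf{x})V\mathbf{x}=\mathbf{x}^tV\mathbf{x}$ together with $\mathbf{1}^t\mathbf{x}_b=1$ gives $\widehat{\mathbf{x}}^tV\widehat{\mathbf{x}}=1/(\mathbf{1}^t\mathbf{y})^2$. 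Therefore $\Delta(\widehat{\mathbf{x}})=\mathbf{0}$.

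\emph{Uniqueness.} Suppose $\mathbf{x}_1,\mathbf{x}_2\in S$ are both fixed points. Their rescalings $\mathbf{y}_i=\mathbf{x}_i/\sqrt{\mathbf{x}_i^tV\mathbf{x}_i}$ are solutions of~\eqref{eq:OLeary} with nonnegative entries, so both lie in the closed nonnegative orthant. Theorem~\ref{th:Oleary} states that there is exactly one solution per orthant, so $\mathbf{y}_1=\mathbf{y}_2$. Normalising by $\mathbf{1}^t\mathbf{y}_i$ then gives $\mathbf{x}_1=\mathbf{x}_2$.

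The main obstacle is the boundary of the orthant. If some $x_{b,i}=0$, a solution may have $y_i=0$ and lie on the common boundary of two orthants, so "one solution per orthant" does not by itself exclude two distinct nonnegative solutions. When $\mathbf{x}_b\in\mathring{S}$, every solution has all coordinates nonzero, because $y_i(V\mathbf{y})_i=x_{b,i}>0$. In that case each solution lies in the interior of a unique orthant, and the argument goes through cleanly; moreover $\widehat{\mathbf{x}}\in\mathring{S}$. For general $\mathbf{x}_b\in S$ I would first try to apply the O'Leary result directly, using its formulation on closed orthants, if that version is available. Failing that, I would reduce to the principal submatrix of $V$ indexed by the support of $\mathbf{x}_b$. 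Fixed points must have $x_i=0$ off the support, since $x_i(V\mathbf{x})_i=0$ there, and one would need to rule out $(V\mathbf{x})_i=0$ with $x_i>0$. I expect this step to require a strict-convexity argument in the style of \cite{spinu}, namely uniqueness of the minimiser of $\tfrac12\mathbf{y}^tV\mathbf{y}-\sum_i x_{b,i}\log y_i$ on the positive orthant. So I would state the result cleanly for $\mathbf{x}_b\in\mathring{S}$ and treat the degenerate case separately.
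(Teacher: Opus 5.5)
For $\mathbf{x}_b\in\mathring{S}$ your argument is correct and follows the paper's route. The paper also takes O'Leary's positive solution $\widehat{\mathbf{y}}$, uses $\mathbf{1}^t\mathbf{x}_b=1$ to get $\widehat{\mathbf{y}}^tV\widehat{\mathbf{y}}=1$, and normalises by $\|\widehat{\mathbf{y}}\|_1$ to land in $S$. Your converse step does not appear in the paper: every fixed point $\mathbf{x}\in S$ gives, via $\mathbf{y}=\mathbf{x}/\sqrt{\mathbf{x}^tV\mathbf{x}}$, a solution of \eqref{eq:OLeary} in the positive orthant. The paper exhibits one root of $\Delta$ and then simply asserts that it is unique, so your version is what actually justifies that assertion.

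The step that would fail is your plan for the degenerate case. No strict-convexity argument can exclude $(V\mathbf{x})_i=0$ with $x_i>0$, because the statement is false when $\mathbf{x}_b\in\partial S$. Take $N=2$, $V=\begin{pmatrix}1&-a\\-a&1\end{pmatrix}$ with $0<a<1$, and $\mathbf{x}_b=[1;0]^t$. Both $[1;0]^t$ and $[1;a]^t/(1+a)$ lie in $S$ and satisfy $\Delta(\mathbf{x})=\mathbf{0}$. For the second one, $V\mathbf{x}=[1-a;0]^t$ and $\mathbf{x}^tV\mathbf{x}=(1-a)/(1+a)$, so $\textrm{diag}(\mathbf{x})V\mathbf{x}=(\mathbf{x}^tV\mathbf{x})\,\mathbf{x}_b$.

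Correspondingly, \eqref{eq:OLeary} has the two distinct nonnegative solutions $[1;0]^t$ and $[1;a]^t/\sqrt{1-a^2}$. So the ``one solution per orthant'' count breaks down on orthant boundaries, exactly as you suspected. The restriction to $\mathbf{x}_b\in\mathring{S}$ is therefore necessary, not merely convenient. The paper's proof tacitly needs it too, since O'Leary's theorem concerns positive right-hand sides. You should state the corollary for $\mathbf{x}_b\in\mathring{S}$ and drop the degenerate case rather than try to prove it.
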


\begin{proof}
From theorem~\ref{th:Oleary}, it exists a unique $ \mathbf{\widehat{y}} \in \mathbb R^{N\times N}$ with positive entries such that $\textrm{diag}( \mathbf{\widehat{y}}) V  \mathbf{\widehat{y}} = \mathbf{x}_b $.
Since $\mathbf{x}_b \in S$, then
$$1= \mathbf 1^t \mathbf{x}_b =  \mathbf 1^t(\textrm{diag}( \mathbf{\widehat{y}}) V  \mathbf{\widehat{y}})=  \mathbf{\widehat{y}}^tV  \mathbf{\widehat{y}}$$
and, consequently, 
$$\textrm{diag}( \mathbf{\widehat{y}}) V  \mathbf{\widehat{y}} = 1\cdot \mathbf{x}_b= \left ( \mathbf{\widehat{y}} ^tV \mathbf{\widehat{y}} \right ) \mathbf{x}_b \ .$$
Let 
$$ \mathbf{\widehat{z}}=\frac{ \mathbf{\widehat{y}}}{\sum_{i=1}^N \widehat y_i} =\frac{ \mathbf{\widehat{y}}}{\|  \mathbf{\widehat{y}}\|_1}$$Note that $ \mathbf{\widehat{z}} \in S$ because the entries are positive real numbers with unitary sum.
From 
$$\textrm{diag}( \mathbf{\widehat{y}}) V  \mathbf{\widehat{y}} =  \left ( \mathbf{\widehat{y}}^tV  \mathbf{\widehat{y}} \right ) \mathbf{x}_b$$
 it holds that
$$ \textrm{diag}\left (\frac{ \mathbf{\widehat{y}}}{\| \mathbf{\widehat{y}}\|_1}\right ) V \frac{ \mathbf{\widehat{y}}}{\| \mathbf{\widehat{y}}\|_1}  =  \left (\frac{ \mathbf{\widehat{y}}^t}{\| \mathbf{\widehat{y}}\|_1}V\frac{ \mathbf{\widehat{y}}}{\|   \mathbf{\widehat{y}}\|_1}\right ) \mathbf{x}_b $$
equivalently, 
$$\textrm{diag}( \mathbf{\widehat{z}}) V  \mathbf{\widehat{z}} = \left ( \mathbf{\widehat{z}}^tV \mathbf{\widehat{z}}\right ) \mathbf{x}_b\qquad \Leftrightarrow \qquad \Delta( \mathbf{\widehat{z}})=0. $$
This proves that the root of $\Delta(\mathbf{x})$ is unique. Then the fixed point of $G(\mathbf{x})$ is unique in $S$.
\end{proof}

In general, as previously noted, the sequence ${\mathbf{x}_n}$ does not necessarily remain within the simplex $S$, unless the step size $|k_n|$ is appropriately bounded. The following theorem provides an explicit condition on $|k_n|$ to ensure that the sequence ${\mathbf{x}_n}$ remains in the simplex. This result is particularly relevant for the practical implementation of the numerical procedure.\footnote{Based on our empirical experience, in the vast majority of cases, the sequence ${\mathbf{x}_n}$ converges to the fixed point within $S$ without any explicit restriction on $|k_n|$. However, in a small number of instances, we observed a finite number of iterations falling outside of $S$. In rare situations, when no restriction is imposed on $|k_n|$, the sequence may converge to a solution located in a different orthant. This typically occurs when an alternative fixed point lies close outside the border of the simplex. For this reason, it is useful to enforce the restriction on $|k_n|$ and ensure the sequence remains within $S$.}

\begin{theorem}\label{normrest}
If $\mathbf{x}_n \in S$, then $\mathbf{x}_{n+1}=\mathbf{x}_n+k_n\Delta(\mathbf{x}_n) \in S$ if 
\begin{eqnarray*}
|k_n| \le \left\{\begin{array}{l} \min \left \{ \frac{1-(\mathbf{x}_n)_i}{ |\Delta(\mathbf{x}_n)_i | }  \ | \ k_n\Delta(\mathbf{x}_n)_i >0  \right  \} \\ \\
 \min \left \{ \frac{(\mathbf{x}_n)_i}{ |\Delta(\mathbf{x}_n)_i| }  \ | \ k_n\Delta(\mathbf{x}_n)_i <0  \right  \} \ ,
\end{array}
\right.
\end{eqnarray*}
where $(\mathbf{x}_n)_i$ and $\Delta(\mathbf{x}_n)_i$ are the $i^{th}$-entries of $\mathbf{x}_n$ and $\Delta(\mathbf{x}_n)$ respectively.
\end{theorem}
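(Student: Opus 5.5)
The plan is to check the two defining properties of $S$ for $\mathbf{x}_{n+1}$: unit sum and nonnegativity of every entry. Write $\mathbf{d}=\Delta(\mathbf{x}_n)$ and $k=k_n$.

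\textbf{Sum constraint.} This does not depend on the bound on $|k_n|$. By definition,
$$\mathbf{1}^t\Delta(\mathbf{x}_n)=\mathbf{1}^t\textrm{diag}(\mathbf{x}_n)V\mathbf{x}_n-(\mathbf{x}_n^tV\mathbf{x}_n)\mathbf{1}^t\mathbf{x}^b=\mathbf{x}_n^tV\mathbf{x}_n-\mathbf{x}_n^tV\mathbf{x}_n=0,$$
because $\mathbf{x}^b\in S$. Hence $\mathbf{1}^t\mathbf{x}_{n+1}=\mathbf{1}^t\mathbf{x}_n=1$, as already noted in the Remark of subsection \ref{fix_gen}.

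\textbf{Entrywise bounds.} I would split the indices by the sign of $k d_i$.

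If $k d_i=0$, then $(\mathbf{x}_{n+1})_i=(\mathbf{x}_n)_i\in[0,1]$ and nothing needs to be checked.

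If $k d_i<0$, the entry decreases, so the upper bound $1$ is automatic. Nonnegativity requires $(\mathbf{x}_n)_i-|k||d_i|\ge 0$, that is $|k|\le (\mathbf{x}_n)_i/|d_i|$. The second minimum in the statement gives exactly this for every such $i$.

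If $k d_i>0$, the entry increases, so nonnegativity is automatic. The first minimum gives $(\mathbf{x}_n)_i+|k||d_i|\le 1$, so the upper bound holds.

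Combining the cases, every entry lies in $[0,1]$ and the entries sum to one, so $\mathbf{x}_{n+1}\in S$.

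A small remark tightens this. Nonnegativity of all entries together with unit sum already forces every entry to be at most $1$. So the first condition is in fact redundant, though harmless. Only the second condition is really needed.

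\textbf{Main subtlety.} The index sets in the two minima depend on the sign of $k_n$, which makes the statement look circular. I would read it as follows: fix the sign of $k_n$ first (it is chosen from $Q$ or $C$), and then restrict the magnitude $|k_n|$ by the displayed minima. Empty sets should be read as imposing no constraint, that is, a minimum of $+\infty$. No other obstacle is expected, since the argument is an elementary componentwise check.
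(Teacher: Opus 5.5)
Your proof is correct and follows the paper's own argument. You use $\mathbf{1}^t\Delta(\mathbf{x}_n)=0$ (from $\mathbf{1}^t\mathbf{x}^b=1$) to preserve the unit sum for any $k_n$, and split on the sign of $k_n\Delta(\mathbf{x}_n)_i$ to get the two displayed minima, exactly as the paper does; your observation that the first minimum is redundant (nonnegativity plus unit sum already forces every entry to be at most $1$) is a valid sharpening the paper does not make.
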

\begin{proof}
First,
$ \mathbf 1^t \Delta(\mathbf{x}_n)=\mathbf{x}_n^tV\mathbf{x}_n-(\mathbf 1^t \mathbf{x}_b)(\mathbf{x}_n^tV\mathbf{x}_n)=\mathbf{x}_n^tV\mathbf{x}_n-\mathbf{x}_n^tV\mathbf{x}_n=0$; this means that if the entries of $\mathbf{x}_n$ have unitary sum, $\mathbf{x}_{n+1}$ has the same property if $\mathbf 1^t \mathbf{x}_{n+1}=\mathbf 1^t(\mathbf{x}_n+k_n \Delta(\mathbf{x}_n))=1$, 
independently from the value of $k_n$. Therefore, it is necessary to prove that each component of $\mathbf{x}_{n+1}$ is non-negative. For simplicity, we omit the index $n$ and refer to $\Delta$ for $\Delta(\mathbf{x})$. We require 
$ 0 \le \mathbf{x}_i +k \Delta_i \le 1$, i.e. $ -\mathbf{x}_1 \le k \Delta_i \le 1-\mathbf{x}_i$, 
where $\mathbf{x}_i$ and $\Delta_i$ are the $i^{th}$ entries of $\mathbf{x}$ and $\Delta$ respectively. By assumption, $\mathbf{0} < \mathbf{x}_1 < \mathbf{1}$.
Two cases are possible.
\smallskip\noindent
\textbf{case $1$}: $k \Delta_i >0$, then $k \Delta_i =|k \Delta_i |= |k|| \Delta_i |$. The value of $|k|$ needs to satisfy the conditions 
\begin{eqnarray*}
\left\{\begin{array}{rcl} |k|&\ge& -\frac{\mathbf{x}_i}{ |\Delta_i| } \\ \\
|k| & \le & \frac{1-\mathbf{x}_i}{ |\Delta_i| }
\end{array}
\right. 
\end{eqnarray*}
Note that the first inequality is always verified.

\smallskip\noindent
\textbf{case $2$}: $k \Delta_i <0$, then $k \Delta_i =-|k \Delta_i |= -|k|| \Delta_i |$. The value of $|k|$ needs to satisfy the conditions 
\begin{eqnarray*}
\left\{\begin{array}{rcl} -|k|&\ge& -\frac{\mathbf{x}_i}{ |\Delta_i| } \\ \\
-|k| & \le & \frac{1-\mathbf{x}_i}{ |\Delta_i| }
\end{array}
\right. \qquad \Leftrightarrow \qquad  \left\{\begin{array}{rcl} |k|&\le& \frac{\mathbf{x}_i}{ |\Delta_i| } \\ \\
|k| & \ge & \frac{\mathbf{x}_i-1 }{ |\Delta_i| }
\end{array}
\right. 
\end{eqnarray*}
In this case, the second inequality is always verified.

\smallskip\noindent
Concluding, $|k|$ verifies 
\begin{eqnarray*}
|k| \le \left\{\begin{array}{l} \min \left \{ \frac{1-\mathbf{x}_i}{ |\Delta_i| }  \ | \ k\Delta_i >0  \right  \} \\ \\
 \min \left \{ \frac{\mathbf{x}_i}{ |\Delta_i| }  \ | \ k\Delta_i <0  \right  \}
\end{array}
\right.
\end{eqnarray*}
\end{proof}

\section{Empirical results}\label{num}

In this section, we present a numerical application to demonstrate the effectiveness of the proposed approach for computing the risk budgeting portfolio. The reference risk measure employed is the standard deviation. Accordingly, we implement the fixed point (FP) algorithm introduced in general terms in Section~\ref{algo}, specifying the marginal risk contributions and the step size $k_n$ as outlined in Section~\ref{stdev}. The performance of the FP algorithm is compared against three standard methods commonly used in the literature. A brief description of these alternative approaches is provided below.

\begin{itemize}
	\item Optimization 1 (OP1): The risk budgeting portfolio is computed by solving the following optimization problem using the MATLAB function \textsl{fmincon}.
	\begin{eqnarray*}
\begin{array}{lll}
\textrm{Minimize} && \sum_{i = 1}^N \sum_{j = 1}^N \left(\mathbf{x}_i (V\mathbf{x})_i - \mathbf{x}^b_j  \right)^2\\
\textrm{subject to} && \mathbf{x}_i \geq 0\\
&& \sum_{i = 1}^N \mathbf{x}_i = 1. 
\end{array}
\end{eqnarray*}
This optimization problem was originally proposed in \cite{maillard2010properties} for the calculation of the risk parity portfolio with respect to the standard deviation. In our case, we adapt it by modifying the objective function in order to compute the risk budgeting portfolio.

	\item Optimization 2 (OP2): The risk budgeting portfolio is computed by solving the following optimization problem using the MATLAB function \textsl{fmincon}
	\begin{eqnarray*}
\begin{array}{lll}
\textrm{Minimize} && \sqrt{\|\mathbf{y}^t V \mathbf{y} - \mathbf{x}^b \|_2}\\
\textrm{subject to} && \sum_{i = 1}^N \log{\mathbf{y}_i} \geq c\\
&& \mathbf{y}_i \geq 0 
\end{array}
\end{eqnarray*}
where $\log$ denotes the natural logarithm and $c$ is an arbitrary constant. This optimization problem was originally proposed in \cite{maillard2010properties} as a useful alternative to OP1 for computing the risk parity portfolio. In our case, we slightly modified the objective function to generalize the problem and derive the risk budgeting portfolio. As highlighted in the introduction, this optimization approach is also employed in \cite{cetingoz2024risk} to establish sufficient conditions for the existence and uniqueness of the risk budgeting portfolio and widely used in the literature.

  \item Non-linear system (NLS): the condition $\frac{\textsl{RC}^{\rho}(\mathbf{x}^*)}{\sum_{i = 1}^N \textsl{RC}^{\rho}_i(\mathbf{x})} = \mathbf{x}^b$ which characterizes the risk budgeting portfolio for a general risk measure $\rho$, defines a nonlinear system of equations. In principle, it is always possible to attempt solving this system numerically to obtain the risk budgeting portfolio. For this purpose, we use the MATLAB function \textsl{lsqnonlin}.
\end{itemize}

All the MATLAB built-in functions used in the application are implemented with their default settings. In particular, the maximum number of iterations is determined by the standard configuration. Increasing the number of iterations may lead to a more accurate solution; however, this typically comes at the cost of increased computational time. The different algorithms are compared based on two criteria: the computational time required to reach the solution and the distance from the effective solution. If $\overline{\mathbf{x}}$ denotes the solution obtained by a given algorithm, the distance to the true risk budgeting portfolio is measured by the 
$\|\textsl{RC}^{\rho}(\overline{x})/\boldsymbol{1}^t \textsl{RC}^{\rho}(\overline{x}) - \mathbf{x}^b \|_2$. This metric evaluates how closely the realized risk contributions match the target risk allocation $\mathbf{x}^b$.

Since computational time may depend on the portfolio size $N$, we compare the performance of the algorithms for various values: $N = 5, 10, 50, 100, 200$. For each value of $N$, we generate $T = 1000$ random experiments. In each experiment, a covariance matrix $V$ is sampled, along with two vectors $\mathbf{x}_1$ and $\mathbf{x}^b$ in $S$, representing the initial point and the target risk budget, respectively. Both vectors are uniformly sampled from the simplex $S$ to stress the numerical efficiency of the algorithms, particularly when one or both are near the boundary of the simplex, which can lead to increased computational complexity. To further assess the robustness of the algorithms, some covariance matrices are chosen to be numerically close to a rank-deficient matrix. For each triplet $(V, \mathbf{x}_1, \mathbf{x}^b)$, the four algorithms are applied, and both the computational time and the distance from the target solution are recorded. Table~\ref{tab1} summarizes the results. For each algorithm and each portfolio size, we report the average convergence time and the mean distance from the true risk budgeting solution, as defined above. The key observation from Table~\ref{tab1} is that the fixed-point (FP) algorithm consistently outperforms the other methods, both in terms of speed and accuracy, across all tested scenarios.

\begin{table}[!h]
  \centering
  \caption{Comparison of average computational time (in seconds) and solution accuracy across different algorithms (FP, OP1, OP2, NLS) and portfolio sizes $(N = 5, 10, 50, 100, 200)$.}
    \begin{tabular}{|r|l|r|r|r|r|}
		\hline
          &       & FP $(L = 0.5)$ & OP1 & NLS & OP2 \\
					\hline
					\hline
    \multicolumn{1}{l}{N = 5} & time(s)  & 0.0007 & 0.0140 & 0.0061 & 0.0158 \\
          & accuracy & 0.0002 & 0.0897 & 0.1527 & 0.3633 \\
					\hline
    \multicolumn{1}{l}{N = 10} & time(s)  & 0.0017 & 0.0228 & 0.0069 & 0.0313 \\
          & accuracy & 0.0004 & 0.0467 & 0.1506 & 0.2916 \\
					\hline
    \multicolumn{1}{l}{N = 50} & time(s)  & 0.0047 & 0.1267 & 0.0363 & 0.0836 \\
          & accuracy & 0.0012 & 0.0397 & 0.0896 & 0.2058 \\
					\hline
    \multicolumn{1}{l}{N = 100} & time(s)  & 0.0100 & 0.2612 & 0.1959 & 0.1089 \\
          & accuracy & 0.0023 & 0.0488 & 0.0626 & 0.1368 \\
					\hline
    \multicolumn{1}{l}{N = 200} & time(s)  & 0.0383 & 1.4118 & 3.6511 & 0.2711 \\
          & accuracy & 0.0040 & 0.0556 & 0.0458 & 0.2877 \\
					\hline
    \end{tabular}%
  \label{tab1}%
\end{table}%

As previously highlighted, the accuracy of the solutions obtained via OP1, OP2, and NLS can be improved by increasing the maximum number of iterations. However, this enhancement would come at the cost of significantly longer convergence times, further penalizing the efficiency of these procedures. It is worth noting that, in addition to outperforming OP1 and OP2, the NLS algorithm demonstrates competitive performance. As expected, computational time increases with the portfolio size $N$ across all algorithms. Nevertheless, the FP algorithm exhibits the greatest resilience to dimensionality, with its performance being least affected by the number of assets. Furthermore, portfolio size does not appear to significantly influence the accuracy of the solution for any of the methods.

We now turn our attention to the FP algorithm, focusing on an application designed to investigate the impact of the parameter $L$. This parameter is externally specified and, in theory, governs the convergence rate—and thus the computational time—of the algorithm. For each triad $(V, \mathbf{x}_1, \mathbf{x}^b)$, we run the FP algorithm while varying the value of $L$, recording both the average computational time and the accuracy of the solution. The outcomes are reported in Table~\ref{tab2}. An interesting observation emerges: the smallest value of $L$ does not always correspond to the fastest convergence. This phenomenon is straightforward to interpret. When $L$ is too small, the algorithm frequently encounters iterations for which the associated polynomial of degree 4 has no real solution. As a result, several steps of the algorithm fail to advance toward the solution, thereby slowing the overall convergence. In contrast, when $L$ approaches 1, although each step length is relatively small, it reliably moves closer to the solution, enhancing overall convergence efficiency.

\begin{table}[!h]
  \centering
  \caption{Comparison of the FP algorithm for different values of the parameter $L = 0.3, 0.5, 0.7, 0.9, 0.95, 0.99$.}
    \begin{tabular}{|l|l|r|r|r|r|r|r|}
		\hline
          &       & FP & FP & FP & FP & FP & FP \\
					&   L    & 0.3 & 0.5 & 0.7 & 0.9 & 0.95 & 0.99 \\
					\hline
					\hline
    N = 5 & time(s)  & 0.0009 & 0.0007 & 0.0009 & 0.0024 & 0.0050 & 0.0234 \\
          & accuracy & 0.0006 & 0.0002 & 0.0001 & 0.0002 & 0.0003 & 0.0002 \\
					\hline
    N = 10 & time(s)  & 0.0014 & 0.0017 & 0.0013 & 0.0033 & 0.0054 & 0.0255 \\
          & accuracy & 0.0003 & 0.0004 & 0.0004 & 0.0003 & 0.0004 & 0.0004 \\
					\hline
    N = 50 & time(s)  & 0.0075 & 0.0047 & 0.0052 & 0.0108 & 0.0204 & 0.1903 \\
          & accuracy & 0.0010 & 0.0012 & 0.0013 & 0.0014 & 0.0016 & 0.0018 \\
					\hline
    N = 100 & time(s)  & 0.0207 & 0.0100 & 0.0103 & 0.0306 & 0.0464 & 0.2280 \\
          & accuracy & 0.0020 & 0.0023 & 0.0027 & 0.0029 & 0.0030 & 0.0030 \\
					\hline
    N = 200 & time(s)  & 0.0747 & 0.0383 & 0.0470 & 0.1183 & 0.2803 & 1.2434 \\
          & accuracy & 0.0029 & 0.0040 & 0.0051 & 0.0057 & 0.0058 & 0.0059 \\
					\hline
    \end{tabular}%
  \label{tab2}%
\end{table}%
Considering the good performance of the FP algorithm relative to alternative procedures, along with its low computational cost per iteration, we recommend selecting values of $L < 1$ close to 1 for practical implementation.

\section{Conclusions}
\label{sec:conclusions}
This paper proposes an alternative approach to risk budgeting, different from the standard optimization-based methods commonly studied in the literature. The advantages of this perspective are manifold. First, it allows for the analysis of existence and uniqueness of the risk budgeting portfolio for a general risk measure by formulating the problem as the existence and uniqueness of a fixed point of a function. Second, the computation of the risk budgeting portfolio does not rely on solving an auxiliary optimization problem that often lacks clear economic interpretation. Additionally, the optimization-based techniques—used in standard approaches—can become computationally demanding, particularly as the number of assets increases. In contrast, the proposed method naturally leads to an algorithm for calculating the risk budgeting allocation, which is characterized by fast convergence and low computational cost, thanks to the simplicity of each iteration. Importantly, the computational burden does not scale significantly with portfolio size. Future research will explore the broader application of this methodology to various risk measures, enabling a detailed comparison with the optimization-based techniques typically employed in practice. From a theoretical standpoint, further work will aim to refine the sufficient conditions for the existence and uniqueness of the risk budgeting portfolio, thereby expanding the class of risk measures compatible with the risk budgeting framework.

\bibliographystyle{abbrvnat}
\bibliography{references}


\end{document}